\documentclass[8pt]{article}
\usepackage{graphicx}
\usepackage{pgfplots}
\usepackage{newlfont}
\usepackage{tikz}
\usepackage{enumitem}
\usepackage{rotating}
\usepackage{adjustbox}
\usepackage{overpic}
\usepackage{float}

\usepackage[colorlinks=true, linkcolor=blue, urlcolor=blue, citecolor=blue]{hyperref}
\usepackage{amsmath,amsthm,amssymb}
\usepackage{subfigure}

\usepackage{graphicx}
\usepackage{amsmath, amsfonts}
\usepackage{dsfont}
\usepackage{booktabs}
\usepackage{fourier}
\usepackage{xcolor}
\usepackage{array}
\usepackage{makecell}
\usepackage{algorithm}
\usepackage{algpseudocode}
\usepackage{lscape}
\usepackage{threeparttable}
\usepackage{hyperref}
\usepackage[T1]{fontenc}
\usepackage[tableposition=top]{caption}
\usepackage{tabularx}
\newtheorem{thm}{Theorem}[section]

\newtheorem{lem}[thm]{Lemma}

\newtheorem{defn}[thm]{Definition}

\newtheorem{example}[thm]{Example}
\numberwithin{equation}{section}
\usepackage{mathrsfs}

\begin{document}
	\title{\textbf{Image Restoration with Impulse Noise Based on Minimax Concave Penalty: Grayscale and Color Images}}
	\author{
		Reza Parvaz$^a$ \footnote{Corresponding author, Email:
			\href{rparvaz@uma.ac.ir}{rparvaz@uma.ac.ir},~
			\href{mailto:reza.parvaz@yahoo.com}{reza.parvaz@yahoo.com}}
	}
	\date{}
	\maketitle
	\begin{center}
		$^a$Department of Engineering Sciences, Faculty of Advanced Technologies, 
		University of Mohaghegh Ardabili, Ardabil, Iran
	\end{center}
	\begin{abstract}
		\noindent
In today's world-where social networks play a significant role-images serve as one of the most important tools 
for transferring concepts. Image degradation caused by noise or blurring hinders the effective transmission of 
the information we expect to derive from an image. Consequently, developing image restoration methods capable 
of revealing crucial information is of great importance. In this work, a model for impulse noise image restoring 
is proposed by using a Minimax Concave Penalty. The proposed model is applicable to both color and grayscale images. 
Then, a numerical algorithm based on the alternating direction method of multipliers is introduced, and the 
convergence of the method is analyzed both theoretically and numerically.
In the final step, the proposed method is evaluated through simulations on several examples, and the results 
demonstrate its effectiveness.

	\end{abstract}
	\vskip.3cm \indent \textit{\textbf{Keywords:}}
	Deblurring; Denoising; Minimax Concave Penalty; Impulse Noise. 
	\vskip.3cm
	
	\section{Introduction and Problem Description}
Among human senses, vision is the most important factor in perceiving information. 
Consequently, images--which facilitate the recording and transmission of information--can 
be considered the most crucial data for comprehension.
However, various factors during the generation and transmission of images can degrade their quality. 
For instance, camera shake causes a phenomenon known as blurring, while noise is another factor that 
leads to the loss of image information.	
However, various factors during the generation and transmission of images can degrade their quality. 
For instance, camera shake causes a phenomenon known as blurring, while noise is another factor that 
leads to the loss of image information.	
Various types of noise affect images, depending on the nature of the image; for instance, Poisson noise is 
a type of noise that commonly affects astronomical and medical images.
Another type of noise that affects the image due to  erroneous transmission, faulty 
switching during the imaging process, and other factors,
is impulse noises \cite{nn01,nna01,nn02,nn03}.
The  salt--and--pepper noise and the random--valued noise \cite{nn01,nna01} can
be considered well-known examples of these types of noise.\\
 
Considering impulse noise, the degradation model for an image is written as 
\begin{align*}
f=Ku+\epsilon,
\end{align*}
where $u,f \in \mathbb{R}^{nm\times 1}$ are the vector forms of the clear 
image ($U  \in \mathbb{R}^{n\times m}$), and the observed degraded 
image  ($F  \in \mathbb{R}^{n\times m}$), respectively.
Also, $\epsilon \in \mathbb{R}^{nm\times 1}$ denotes  the impulse noise  
in the observed image.
In this  model, $K\in \mathbb{R}^{nm \times nm}$ represents a matrix 
obtained using the point spread function and the boundary conditions \cite{nn04}.
Directly solving these types of problems is not feasible due to their numerical 
ill-conditioning; therefore, the total variation (TV)  method is proposed for 
their solution \cite{nn05}. 
Although machine learning-based methods have been developed in recent years to 
solve these types of problems \cite{nn09,nn010}, 
issues such as the substantial data requirements of training and the need for 
significant computational resources
have prevented them from completely replacing classical methods \cite{nn011}.
In recent years, TV based methods have been widely employed to solve such problems
\cite{na1,na2}.
The $\ell_0$ norm as data fidelity in TV and the  $\ell_1$
norm on gradients for regularization is proposed in \cite{nn1}.
Another method based on $\ell_0$ is studied in \cite{nna1}.
 Despite its acceptable results, this model has certain flaws, including that the
$\ell_0$ term is nonconvex and discontinuous, and minimizing it exactly is NP-hard in general.
In \cite{nn2,nn3}, $\ell_1$ TV is introduced.
Although this model is convex, it is well known that it suffers from bias (e.g., shrinkage 
of large residuals and contrast loss).
Another method based on
$\ell_p$ TV ($0<p<1$) is studied in \cite{nn4}. The $\ell_p$  fidelity term is 
nonconvex and non-Lipschitz and the overall optimization problem is nonsmooth.
In the models studied above, the $\ell_1$ and $\ell_p$ norms are introduced as 
alternatives to the $\ell_0$ norm. The Minimax Concave Penalty, presented in \cite{nn08}, 
offers another approach for replacing and approximating the $\ell_0$ norm.
Unlike the 
$\ell_1$
penalty, which is biased, this penalty is continuous and unbiased while still promoting sparsity.
Although this alternative may not solve the problem of non-convexity,
non-convex penalties can yield a more accurate estimation of the underlying signal
\cite{nn06,nn07}.
This penalty has been used to solve the deblurring problem in \cite{nn5,nn6} and the Poisson 
noise removal problem in \cite{nn7}. 
 Also, an interesting discussion regarding the structure of this type of penalty is studied in \cite{nn8}.
 Based on the above discussion, in this paper, we propose a impulse noise removal model 
 based on this penalty. In the proposed model, 
the framelet transform and fractional order total varational (FTV) methods are employed 
to enhance sparsity and preserve edges.
Although the proposed model is presented for grayscale images, it is also being extended to color images.\\

The overall structure of the article can be summarized as follows.
Section \ref{sec1} provides a brief overview of the tools and fundamental concepts used in the article.
The proposed model for solving the impulse noise removal problem and its extension to color 
images are presented in Section \ref{sec2}. In addition, a numerical method 
based on the alternating direction method of multipliers 
is also provided in this section to solve the proposed model.
The convergence of the proposed method is studied in Section \ref{sec4}.
To evaluate the performance of the proposed method for image restoration, 
several numerical examples are presented and analyzed in Section \ref{sec5}.
\section{Preliminaries and Tools} \label{sec1}
In this section, we present a summary of the tools and concepts that are used in the subsequent sections.\\

\begin{defn} \cite{nn9,nn10}
For an image $u$, the fractional order total variation norm  
is defined as 
\begin{align*}
\|u\|_{\mathrm{FTV}}=\sum_{i,j}\sqrt{ |D^{\alpha}_x u_{i,j}|^2+|D^{\alpha}_y u_{i,j}|^2},
\end{align*}
where $D^{\alpha}_x$ and $D^{\alpha}_y$ denote de the G-L fractional-order 
of order $\alpha$ and defined as:
\begin{align*}
&D^{\alpha}_x u_{i,j}=\sum^{k-1}_{l=0} (-1)^l \frac{\Gamma(\alpha+1)}{\Gamma(l+1)\Gamma(\alpha-l+1)}u_{i-k,j},\\
&D^{\alpha}_y u_{i,j}=\sum^{k-1}_{l=0} (-1)^l \frac{\Gamma(\alpha+1)}{\Gamma(l+1)\Gamma(\alpha-l+1)}u_{i,j-k}.
\end{align*}	
Here $k$ is the number of neighboring pixels that are used to obtain the derivative.
\end{defn}

Another tool playing a fundamental role in this article is the Minimax Concave Penalty (MCP). 
We present a definition of this concept below. Readers can find further information on this topic in \cite{nn08,nn8}.

\begin{defn}
The MCP function is defined as follows \cite{nn08,nn8}
\begin{align*}
h_{\gamma, \eta}(x)=\int^{|x|}_0 \big(\gamma-\frac{|t|}{\eta}\big)_{+}dt,
\end{align*}
where $\gamma>0$ and $\eta>1$ are constants.
\end{defn}

\begin{defn}
For a vector $x \in \mathbb{R}^n$, the vector MCP  is defined by \cite{nn8}
\begin{align*}
\| x\|_\mathrm{MCP}=\sum_{i=1}^{n}h_{\gamma,\eta}(x_i).
\end{align*}
 \end{defn}

By considering the above definitions and following an argument similar to that in \cite{nn8},
we state the following lemma, which will be required later for solving the model.	
		
\begin{lem}\label{lemm1}
	Let $y\in \mathbb{R}^n$, also let $\alpha>0$, then the solution of the following optimization problem
\begin{align*}
x^{\ast}=\arg \min_{x} \frac{1}{2}\|x-y\|^2_2+\alpha \|x\|_\mathrm{MCP},	
\end{align*}
is obtained as
\begin{align*}
	x^{\ast}=\psi_{\alpha}^{\gamma,\eta}(y):= 
	\min \left\{ |y|, \max\left\{ \frac{\gamma (|y|-\alpha \eta)}{\gamma-\alpha},0 \right\} \right\}
	\odot \mathrm{sign}(y),
\end{align*}
where $\odot$ denotes the Hadamard product.
\end{lem}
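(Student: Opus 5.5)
The plan is to exploit separability and reduce the problem to a one-dimensional minimization, which can then be solved in closed form by a case analysis on the piecewise structure of $h_{\gamma,\eta}$. Since $\frac12\|x-y\|_2^2+\alpha\|x\|_{\mathrm{MCP}}=\sum_{i=1}^n\bigl(\tfrac12(x_i-y_i)^2+\alpha h_{\gamma,\eta}(x_i)\bigr)$, the minimization decouples coordinatewise. It therefore suffices to show that, for scalar $y$,
\begin{align*}
g(x)=\tfrac12(x-y)^2+\alpha h_{\gamma,\eta}(x)
\end{align*}
is minimized at the scalar version of $\psi_\alpha^{\gamma,\eta}(y)$. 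The Hadamard product with $\mathrm{sign}(y)$ then assembles the vector answer.

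\textbf{Step 1: reduce to $x\ge 0$ and $y\ge 0$.} Since $h_{\gamma,\eta}$ is even, replacing $x$ by $|x|\,\mathrm{sign}(y)$ never increases $g$. Hence I may take $y\ge0$ and $x\ge0$ and reattach the sign at the end.

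\textbf{Step 2: piecewise form of $h_{\gamma,\eta}$.} From the integral definition, $h_{\gamma,\eta}$ is $C^1$ on $[0,\infty)$. Writing $\tau$ for the kink location (where $\gamma-t/\eta$ vanishes), the penalty is quadratic with derivative decreasing linearly from $\gamma$ on $[0,\tau]$ and constant beyond $\tau$. Its second derivative is bounded below by $-1/(\text{slope parameter})$. Consequently $g$ is strictly convex precisely when $\alpha$ is smaller than that parameter, and this is the hypothesis implicit in the denominator $\gamma-\alpha$ of the stated formula. I will add this standing assumption explicitly, since without it the minimizer need not be unique and the formula can fail. I also expect to have to match the roles of $\gamma$ and $\eta$ to the convention of \cite{nn8}. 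With the definition exactly as written, the first-order condition on the quadratic piece reads $x-y+\alpha(\gamma-x/\eta)=0$, giving $x=\eta(y-\alpha\gamma)/(\eta-\alpha)$, which is the stated expression with $\gamma$ and $\eta$ interchanged. Pinning down this bookkeeping, together with the convexity condition, is the main point needing care. The rest is routine.

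\textbf{Step 3: case analysis using convexity.} With $g$ strictly convex and $C^1$ on $(0,\infty)$, the minimizer is characterized by $0\in\partial g(x)$, where $\partial h(0)=[-\gamma,\gamma]$. I distinguish three cases.
\begin{enumerate}
\item[(i)] If $y\le\alpha\cdot(\text{threshold})$, then $0\in\partial g(0)$ and $x^*=0$.
\item[(ii)] In the middle range, the stationary point of the quadratic piece is positive and lies below the kink, so it is the minimizer. This gives the linear (firm-thresholding) formula.
\item[(iii)] If $y$ exceeds the kink, then on that region $g=\tfrac12(x-y)^2+\text{const}$, and $x^*=y$.
\end{enumerate}
Finally I will check that the single expression $\min\{y,\max\{\cdot,0\}\}$ reproduces all three cases. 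The linear formula is $\ge y$ exactly when $y$ is beyond the kink, because the inequality $\eta(y-\alpha\gamma)\ge(\eta-\alpha)y$ simplifies to $y\ge\gamma\eta$. It is $\le0$ exactly in case (i). Continuity at the breakpoints confirms consistency, and reattaching $\mathrm{sign}(y)$ completes the proof.
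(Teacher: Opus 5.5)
Your proof is correct. It is the standard coordinatewise, case-by-case firm-thresholding derivation, and the paper gives no proof of its own beyond deferring to ``an argument similar to that in \cite{nn8}'', so there is no different route to compare.

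Your bookkeeping worry is real, and you should state it as a correction of the lemma rather than as a matter of convention. With $h_{\gamma,\eta}$ as defined, the threshold is $\alpha\gamma$, the kink is at $\gamma\eta$, the middle branch is $\eta(|y|-\alpha\gamma)/(\eta-\alpha)$, and the hypothesis $\alpha<\eta$ is needed. The printed formula genuinely fails: for $\gamma=1$, $\eta=2$, $\alpha=\tfrac12$, $y=0.8$ it returns $0$, yet $g(0.4)=0.26<0.32=g(0)$, and $0.4$ is the true minimizer. So what you prove is the corrected statement. It is also the version consistent with the paper's later use of $\gamma$ as the subgradient bound and $\tfrac12\gamma^2\eta$ as the saturation value in Lemmas~\ref{lmc1} and~\ref{lmc2}.
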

The next tool employed in the presented model is the concept of the frame. Given the breadth of this concept, the reader can find sufficient information regarding its definition, as well as its computational and practical methods, in \cite{nn11}. In this paper, we utilize $W$ as the framelet transformmatrix constructed from the low-pass and high-pass filters
$
h_0=\frac{1}{4}[1,2,1], h_1=\frac{\sqrt{2}}{4}[1,0,-1]$ and $h_2=\frac{1}{4}[-1,2,-1]$.\\
Some of the symbols used in the text are as follows:
 $I, F$ and $\ast$ are used for identity matrix, the fast Fourier transform (FFT) and complex conjugate, respectively
	
	\section{Proposed Model and Numerical Algorithm} \label{sec2}
In this section, we introduce the model used for image restoration and present a numerical method to solve it; finally, the proposed algorithm is extended to color images.

	\subsection{Proposed MCP-Based Model }
	To solve the impulse noise problem, we introduce the following model:
 \begin{align}\label{mod1}
 	\min_u \|Ku-f\|_{\mathrm{MCP}}+\lambda_1\|Wu\|_{\mathrm{MCP}}+\lambda_2 \|u\|_{\mathrm{FTV}},
 \end{align}
where $\lambda_1$ and $\lambda_2$ are given positive regularization parameters.
The proposed model is nonconvex, and we employ the alternating direction method of multipliers to solve it. 
Although this method is originally developed for convex problems, recent studies indicate that it also yields 
convergent numerical algorithms when applied to nonconvex models \cite{nn12,nn14}.\\

In the first step of ADMM, by introduce auxiliary variables $\boldsymbol{m}=\{m_i\}^{3}_{i=1}$, the
 model \eqref{mod1} can be rewritten as
\begin{align}
&\nonumber \min_{u,\boldsymbol{m}} \|m_1\|_{\mathrm{MCP}}+\lambda_1\|m_2\|_{\mathrm{MCP}}+\lambda_2\|m_3\|_{\mathrm{FTV}},\\
&\text{s.t.}\quad m_1=Ku-f,\quad m_2=Wu, \quad m_3=u.\label{eq1}
\end{align}	
Now by defining 
$\boldsymbol{n}=\{n_i\}^{3}_{i=1}$ and $\boldsymbol{\mu}=\{\mu_i\}^{3}_{i=1}$ as Lagrange multipliers 
and penalty parameters, respectively, the augmented Lagrangian for \eqref{eq1} is obtained as:
\begin{align}
\nonumber L(u,\boldsymbol{m},\boldsymbol{n},\boldsymbol{\mu})=& 
\|m_1\|_{\mathrm{MCP}}+\lambda_1\|m_2\|_{\mathrm{MCP}}+\lambda_2\|m_3\|_{\mathrm{FTV}}\\
\nonumber &+\mu_1 \langle Ku-f-m_1,n_1\rangle+\frac{\mu_1}{2}\|Ku-f-m_1\|^2_2\\ 
\nonumber &+\mu_2 \langle Wu-m_2,n_2\rangle+\frac{\mu_2}{2}\|Wu-m_2\|^2_2\\ 
&+\mu_3 \langle u-m_3,n_3\rangle+\frac{\mu_3}{2}\|u-m_3\|^2_2. \label{eq3}
\end{align}	
Then the problem is $\min_{u,\boldsymbol{m} }L(u,\boldsymbol{m},\boldsymbol{n},\boldsymbol{\mu})$. 

\subsection{Solving the Subproblem}
In this subsection, we give the details of the ADMM iterative method. We assume that the values 
at the $j$ step are known and the objective is to calculate the solutions at the $j+1$ step.\\
	
\noindent By considering the augmented Lagrangian \eqref{eq3},  $u$ subproblem is obtained as 
\begin{align*}
u^{(j+1)}=&\arg\min_u \mu_1^{(j)} \langle Ku-f-m^{(j)}_1,n^{(j)}_1\rangle+\frac{\mu_1^{(j)}}{2}\|Ku-f-m^{(j)}_1\|^2_2\\
&+\mu_2^{(j)} \langle Wu-m^{(j)}_2,n^{(j)}_2\rangle+\frac{\mu^{(j)}_2}{2}\|Wu-m^{(j)}_2\|^2_2\\
&+\mu_3^{(j)} \langle u-m^{(j)}_3,n^{(j)}_3\rangle+\frac{\mu^{(j)}_3}{2}\|u-m^{(j)}_3\|^2_2.
\end{align*}	
The above subproblem can be rewritten as:
\begin{align}\label{eq2}
u^{(j+1)}=&\arg\min_u  \frac{\mu_1^{(j)}}{2} \|Ku-\xi_1\|^2_2+\frac{\mu^{(j)}_2}{2}\|Wu-\xi_2\|^2_2
+\frac{\mu^{(j)}_3}{2}\|u-\xi\|^2_2,
\end{align}
where:
\begin{align*}
\xi_1:=f+m^{(j)}_1-\frac{1}{\mu^{(j)}_1} n^{(j)}_1, \quad \xi_2:=m^{(j)}_2
-\frac{1}{\mu^{(j)}_2}n^{(j)}_2, \quad \xi_3:=m^{(j)}_3-\frac{1}{\mu^{(j)}_3}n^{(j)}_3.
\end{align*}
By using the optimal condition for \eqref{eq2}, the following equation is obtained:
\begin{align*}
	\big(\mu_1 K^TK+\mu^{(j)}_2W^TW+\mu^{(j)}_3\big)u^{(j+1)}=\mu^{(j)}_1 A^T\xi_1+\mu^{(j)}_2W^T\xi_2+\mu^{(j)}_3\xi_3.
\end{align*}
If we consider the boundary condition as a periodic condition, this equation can be written as 
\begin{align}\label{eq8}
	\big(\mu_1 \Lambda^{\ast}\Lambda+(\mu^{(j)}_2+\mu^{(j)}_3)I\big)Fu^{(j+1)}=\mu^{(j)}_1 
	\Lambda^{\ast}F\xi_1+\mu^{(j)}_2FW^T\xi_2+\mu^{(j)}_3F\xi_3.
\end{align}
Also,
$\Lambda$ denotes a diagonal matrix that entries are determined by the eigenvalues of the blurred matrix.
The solution of this equation is easily obtained using Fourier calculations.
\\

\noindent The subproblem $m_1$ is give as:
\begin{align*}
	m^{(j+1)}_1=\arg \min_{m_1} \|m_1\|_{\mathrm{MCP}}+\mu^{(j)}_1 \langle Ku^{(j+1)}
	-f-m_1,n^{(j)}_1\rangle+\frac{\mu^{(j)}_1}{2}\|Ku^{(j+1)}-f-m_1\|^2_2.
\end{align*}
By defining $\xi_4:=Ku^{(j+1)}-f+1/\mu^{(j)}_1n^{(j)}_1 $, the above problem can be written as:
\begin{align}\label{eq4}
	m^{(j+1)}_1=\arg \min_{m_1} \|m_1\|_{\mathrm{MCP}}+\frac{\mu^{(j)}_1}{2}\|m_1-\xi_4\|^2_2.
\end{align}
The solution of the problem \eqref{eq4} can be obtained using Lemma \ref{lemm1} as follows:
\begin{align}\label{eq9}
m_1=\psi_{ \frac{1}{\mu_1^{(j)}}  }^{\gamma,\eta}(\xi_4).
\end{align}

\noindent Subproblem $m_2$ is obtained as
\begin{align*}
	m^{(j+1)}_2=
	\arg \min_{m_2} \lambda_1\|m_2\|_{\mathrm{MCP}}+\mu_2 \langle Wu^{(j+1)}
	-m_2,n^{(j)}_2\rangle+\frac{\mu_2}{2}\|Wu^{(j+1)}-m_2\|^2_2.
\end{align*}
Considering  $\xi_5:=Wu^{(j+1)}+1/\mu^{(j)}_2n^{(j)}_2 $, the above problem can be written as:
\begin{align}\label{eq6}
	m^{(j+1)}_2=\arg \min_{m_2} \lambda_1 \|m_2\|_{\mathrm{MCP}}+\frac{\mu^{(j)}_2}{2}\|m_2-\xi_5\|^2_2.
\end{align}
By using  Lemma \ref{lemm1}, the solution of \eqref{eq6} is given as
\begin{align}\label{eq11}
m^{(j+1)}_2=\psi_{ \frac{\lambda_1}{\mu_2^{(j)}}  }^{\gamma,\eta}(\xi_5).
\end{align}

\noindent In the final stage, we consider the subproblem $m_3$ as
\begin{align}\label{eq7}
m^{(j+1)}_3=\arg \min_{m_3} \lambda_2\|m_3\|_{FTV}+\mu_3 \langle u^{(j+1)}-m_3,
n^{(j)}_3\rangle+\frac{\mu^{(j)}_3}{2}\|u^{(j+1)}-m_3\|^2_2.
\end{align}
By defining $\xi_6:=u^{(j+1)}+1/\mu^{(j)}_3n^{(j)}_3$, problem \eqref{eq7} is changed as 
\begin{align}\label{eq5}
m^{(j+1)}_3=\arg \min_{m_3} \lambda_2\|m_3\|_{FTV}+\frac{\mu^{(j)}_3}{2}\|m_3-\xi_6\|^2_2.
\end{align}
In order to solve this subproblem, the dual problem is used. Before obtaining the 
solution, we consider the following lemma.
\begin{lem}\label{lem2}
The dual problem of $J(u)=\min_u \|u\|_{\mathrm{FTV}}$ is obtained as \cite{nn14,nn15}
\begin{align*}
	J(u)=\sup_{p} \langle p, \nabla^{\alpha}u\rangle-J^{\ast}(p),
\end{align*}
where:
\begin{align*}
	J^{\ast}(p)= \begin{cases}
		0,      & \text{if } |p| \leq 1, \\[6pt]
		\infty, & \text{if } |p| > 1.
	\end{cases}
\end{align*}
\end{lem}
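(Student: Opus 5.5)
The plan is to treat the statement as a Fenchel--Moreau duality computation: write $\|u\|_{\mathrm{FTV}}$ as the composition of a convex, positively homogeneous function with a linear operator, and identify its conjugate as an indicator function. Let $\nabla^{\alpha}u=(D^{\alpha}_x u, D^{\alpha}_y u)$ be the discrete fractional gradient, a linear map from $\mathbb{R}^{nm}$ to $\mathbb{R}^{nm}\times\mathbb{R}^{nm}$. Then $\|u\|_{\mathrm{FTV}}=\Phi(\nabla^{\alpha}u)$, where $\Phi(q)=\sum_{i,j}|q_{i,j}|$ and $|q_{i,j}|=\sqrt{(q^x_{i,j})^2+(q^y_{i,j})^2}$ is the pointwise Euclidean norm. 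So the whole statement reduces to proving the dual representation of $\Phi$ and then substituting $q=\nabla^{\alpha}u$. I read $J^\ast$ as the conjugate of $\Phi$, and $|p|\le 1$ as $\max_{i,j}|p_{i,j}|\le 1$.

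First I compute $\Phi^\ast(p)=\sup_q \langle p,q\rangle-\Phi(q)$. The sum splits over pixels, so it suffices to study a single $\mathbb{R}^2$ block, where the summand is $\langle p_{i,j},q_{i,j}\rangle-|q_{i,j}|$.
\begin{itemize}
\item If $|p_{i,j}|\le 1$, Cauchy--Schwarz gives $\langle p_{i,j},q_{i,j}\rangle\le|q_{i,j}|$, so the supremum is $0$, attained at $q_{i,j}=0$.
\item If $|p_{i,j}|>1$, take $q_{i,j}=t\,p_{i,j}$ with $t\to\infty$. This gives $t|p_{i,j}|(|p_{i,j}|-1)\to\infty$.
\end{itemize}
Hence $\Phi^\ast$ is $0$ when $\max_{i,j}|p_{i,j}|\le1$ and $+\infty$ otherwise, which is exactly $J^\ast$ as stated.

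Second, $\Phi$ is finite, convex and continuous, hence proper, convex and lower semicontinuous. Fenchel--Moreau therefore gives $\Phi=\Phi^{\ast\ast}$, that is, $\Phi(q)=\sup_p\langle p,q\rangle-\Phi^\ast(p)$. One can also check this directly, which avoids citing the theorem. The supremum restricted to $|p|\le1$ is at most $\Phi(q)$ by Cauchy--Schwarz. Equality holds for $p_{i,j}=q_{i,j}/|q_{i,j}|$ when $q_{i,j}\neq0$, and $p_{i,j}=0$ otherwise. Substituting $q=\nabla^{\alpha}u$ then yields
\begin{align*}
\|u\|_{\mathrm{FTV}}=\sup_{p}\langle p,\nabla^{\alpha}u\rangle-J^\ast(p).
\end{align*}

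There is no real analytic obstacle. The main care point is interpretation: the statement writes $J(u)=\min_u\|u\|_{\mathrm{FTV}}$, but it must be read as $J(u)=\|u\|_{\mathrm{FTV}}$ for fixed $u$. The constraint $|p|\le1$ must also be understood pixelwise in the Euclidean norm (the isotropic dual norm), since an $\ell^\infty$ or global reading would give the wrong conjugate. With the direct attainment argument above, the proof is self-contained and elementary.
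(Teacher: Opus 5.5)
Your argument is correct. The paper gives no proof of this lemma and simply defers to \cite{nn14,nn15}, so your computation is a self-contained replacement for that citation rather than an alternative to an argument in the text. You compute the conjugate of the pixelwise isotropic norm $\Phi$ block by block: Cauchy--Schwarz handles $|p_{i,j}|\le 1$, and the ray $q_{i,j}=t\,p_{i,j}$ handles $|p_{i,j}|>1$. You then verify $\Phi=\Phi^{\ast\ast}$ directly with the maximizer $p_{i,j}=q_{i,j}/|q_{i,j}|$, so you need not invoke Fenchel--Moreau at all.

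The citation buys brevity, and the cited literature covers the continuous setting. Your argument buys a precise description of the discrete objects the algorithm actually uses, and it resolves three ambiguities in the statement as written:
\begin{itemize}
\item $J(u)$ must be read as $\|u\|_{\mathrm{FTV}}$ itself, not as a minimum over $u$.
\item $J^{\ast}$ must be the conjugate of the outer function $\Phi$, not of $u\mapsto\|u\|_{\mathrm{FTV}}$, whose conjugate would be the indicator of $(\nabla^{\alpha})^T\Omega$.
\item $|p|\le 1$ must be the pixelwise Euclidean constraint. This is exactly the set $\Omega$ in \eqref{eq14}, onto which the $p$-update projects.
\end{itemize}

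Your identity also holds for every $q$ before you substitute $q=\nabla^{\alpha}u$. It therefore does not depend on the exact form of the G--L differences, including the index slip $u_{i-k,j}$ in the paper's definition. Linearity of $\nabla^{\alpha}$ is needed only afterwards, to pass to the saddle-point form \eqref{eq14}.
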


By using Lemma \ref{lem2}, the dual problem of
\eqref{eq5}, is obtained as:
\begin{align}\label{eq14}
\big(m^{(j+1)}_3, p^{(j+1)}\big)=
\arg \min_{m_3}\max_{p \in \Omega} \frac{\lambda_2}{\mu^{(j)}_3}
\langle  p, \nabla^{\alpha}m_3\rangle
+\frac{1}{2} \|m_3-\xi_6\|^2_2,
\end{align}
where $\Omega=\big\{p\in \mathbb{R}^{2nm}|p_i\in \mathbb{R}^2, \|p_i\|_2 \leq 1, \forall i \in \{1,\cdots,nm\}\big\}$. 
The solution of the above dual problem can be obtained by the following iterative scheme
\begin{align*}
& p^{(j+1)}=\arg \max_{p \in \Omega} \frac{\lambda_2}{\mu_3^{(j)}}
\langle p,\nabla^{\alpha} \hat{m}^{(j)}_3 \rangle
-\frac{1}{2\nu}\|p-p^{(j)}\|^2_2,\\
& m^{(j+1)}_3=\arg \min_{m_3} \frac{\lambda_2}{\mu_3}
\langle p^{(j+1)},\nabla^{\alpha} m_3\rangle
+\frac{1}{2}\|m_3-\xi_6\|^2_2+\frac{1}{2\tau}
\| m_3-\hat{m}_3^{(j)}\|^2_2,\\
&\hat{m}^{(j+1)}_3=2m^{(j+1)}_3-m^{(j)}_3,
\end{align*}	
where $\tau$ and $\nu$ are step size and a positive constant, respectively. After simplifying, we obtain:
\begin{align*}
&p^{(j+1)}=\frac{p^{(j)}+\frac{\lambda_2 \nu}{\mu_3}\nabla^{\alpha}\hat{m}^k_3}{
	\max\Big(\big|p^{(j)}+\frac{\lambda_2 \nu}{\mu_3} \nabla^{\alpha}\hat{m}^k_3\big|,1\Big)},\\
&m^{(j+1)}_3=m^{(j)}_3-\tau \Big( \frac{\lambda_2}{\mu_3}(\nabla^{\alpha})^Tp^{(j+1)}+m^{(j)}_3-\xi_6\Big).
\end{align*}

After calculating the solutions to the subproblems in the previous discussion, the 
$\boldsymbol{n}^{(j+1)}=\{n^{(j+1)}_i\}^3_{i=1}$ and $\boldsymbol{\mu}^{(j+1)}=\{\mu^{(j+1)}_i\}^3_{i=1}$
are updated in the final stage of the iterative method as follows:
\begin{align}\label{eq10}
&n^{(j+1)}_1=n^{(j)}_1+\mu^{(j)}_1(Ku^{(j+1)}-f-m^{(j+1)}_1),\\\label{eq12}
&n^{(j+1)}_2=n^{(j)}_2+\mu^{(j)}_2(Wu^{(j+1)}-m_2^{(j+1)}),\\\label{eq15}
&n^{(j+1)}_3=n^{(j)}_3+\mu^{(j)}_3(u^{(j+1)}-m_3^{(j+1)}),\\ \label{eq16}
& \mu^{(j+1)}_i=\sigma_i\mu^{(j)}_i,~i=1,2,3,
\end{align}
where $\sigma_i>1,~i=1,2,3,$  are considered as fixed constants. 
In continuation of the paper, for simplicity, we use 
$\boldsymbol{\sigma}=\{\sigma_i\}^3_{i=1}$.
The general structure of the proposed algorithm is presented in Algorithm \ref{alg1}.

\begin{algorithm}[!ht]
	\caption{Proposed Numerical Algorithm for Impulse Noise Grayscale Image Restoration
	}
	\label{alg1}
	\begin{algorithmic}[1]
		\State \textbf{Input:} Blurred and noisy image $f$ and blur kernel $K$
        \State \textbf{Parameters:} Set $\lambda_1,\lambda_2,\gamma, \alpha, \eta, \tau, \nu, \boldsymbol{\mu}^{0}, \boldsymbol{\sigma}$
        \State \textbf{Initialize:} $u^{(0)}=f, \boldsymbol{m}^{0}=\boldsymbol{n}^{0}=0, \rho^{(0)}=0, \hat{m}_3=f$
        \State $j \gets 0$
        \Repeat
		\State Compute $u^{(j+1)}$ by solving \eqref{eq8}
		\State Compute $m^{(j+1)}_1$ by using \eqref{eq9}
		\State Compute $n^{(j+1)}_1$ by using \eqref{eq10}
		\State Compute $m^{(j+1)}_2$ by using \eqref{eq11}
        \State Compute $n^{(j+1)}_2$ by using \eqref{eq12}
		\State Compute $m^{(j+1)}_3$ by solving the dual problem \eqref{eq14}
        \State Compute $n^{(j+1)}_3$ by using \eqref{eq15}
        \State Update $\boldsymbol{\mu}^{(j+1)}$ by using \eqref{eq16}
		\State $j \gets j+1$
       \Until{a stopping criterion is satisfied}
       \State \textbf{Output:} Deblurred and denoised image $u$
		
	\end{algorithmic}
\end{algorithm}

\subsection{Extension to Color Image Denoising}
In this subsection, we extend the method from the previous section to color 
images. In the first step, we expand the symbols used in the previous subsection for color images as follows:
\[u=\{u_{r},u_{g},u_{b}\}, \quad f=\{f_{r},f_{g},f_{b}\},\quad m_i=\{m_{i,r},m_{i,g},m_{i,b}\},
\quad n_i=\{n_{i,r},n_{i,g},n_{i,b}\},~~i=1,2,3.\]
In the color images deblurring, the blurred matrix
is considered as $K \in \mathbb{R}^{3nm \times 3nm}$ where
\cite{nn04}
\begin{align*}
	K= \left[ {\begin{array}{*{20}{c}}
			{\begin{array}{*{20}{c}}
					K_{rr}  \\
					K_{gr}  \\
					K_{br}  \\
			\end{array}} & {\begin{array}{*{20}{c}}
					K_{rg}  \\
					K_{gg}  \\
					K_{bg}  \\
			\end{array}} & {\begin{array}{*{20}{c}}
					K_{rb}  \\
					K_{gb}  \\
					K_{bb}  \\
			\end{array}}  \\
	\end{array}} \right].
\end{align*}
Using the periodic condition and the optimal condition for the 
 $u$ subproblem, we find:
\begin{align}\label{color1}
	\big(\mu_1 \Lambda^{\ast}\Lambda+(\mu^{(j)}_2+\mu^{(j)}_3)I\big)\hat{F}u^{(j+1)}
	=\mu^{(j)}_1 \Lambda^{\ast}\hat{F}\xi_1+\mu^{(j)}_2\hat{F}W^T\xi_2+\mu^{(j)}_3\hat{F}\xi_3,
\end{align}
where
$\hat{F}:=F\otimes I$, with $\otimes$ and $F$ denoting  the Kronecker product and 
the Fourier transform, respectively.
Also, $\Lambda$ is given by the Fourier transform decomposition as
\begin{align*}
	\Lambda = \left[ {\begin{array}{*{20}{c}}
			{\begin{array}{*{20}{c}}
					\Lambda_{rr}  \\
					\Lambda_{gr}  \\
					\Lambda_{br}  \\
			\end{array}} & {\begin{array}{*{20}{c}}
					\Lambda_{rg}  \\
					\Lambda_{gg}  \\
					\Lambda_{bg}  \\
			\end{array}} & {\begin{array}{*{20}{c}}
					\Lambda_{rb}  \\
					\Lambda_{gb}  \\
					\Lambda_{bb}  \\
			\end{array}}  \\
	\end{array}} \right],
\end{align*}
where $\Lambda_{ij}, i,j=r,g,b$ are obtained as diagonal matrices that depend on the eigenvalues of $K_{ij}, i,j=r,g,b$.
The other subproblems are solved in a manner similar to the topics discussed in the previous subsection.
The general structure of the proposed algorithm for color images is presented in Algorithm \ref{alg2}.
In this algorithm, it should be noted that the $\xi_{4,l}$ and $\xi_{5,l}$ values are the same as $\xi_{4}$ and $\xi_{5}$  from the previous sections, extended for each of the image color layers.
\begin{algorithm}[!ht]
	\caption{Proposed Numerical Algorithm for Impulse Noise Color Image Restoration
	}
	\label{alg2}
	\begin{algorithmic}[1]
		\State \textbf{Input:} Blurred and noisy image $f$ and blur kernel $K$
		\State \textbf{Parameters:} Set $\lambda_1,\lambda_2,\gamma, \alpha, \eta, \tau, \nu, \boldsymbol{\mu}^{0}, \boldsymbol{\sigma}$
		\State \textbf{Initialize:} $u^{(0)}=f, \boldsymbol{m}^{0}=\boldsymbol{n}^{0}=0, \rho^{(0)}=0, \hat{m}_3=f$
		\State $j \gets 0$
		\Repeat
		\State Compute $u^{(j+1)}$ by solving \eqref{color1}
		\For{l=r,g,b}
		\State Compute  $m^{(j+1)}_{1,l}=\psi_{ 1/\mu_1^{(j)}  }^{\gamma,\eta}(\xi_{4,l})$
		\State Compute  $n^{(j+1)}_{1,l}=n^{(j)}_{1,l}+\mu^{(j)}_{1,l}(Ku^{(j+1)}_{l}-f_{l}-m^{(j+1)}_{1,l})$
		\State Compute  $m^{(j+1)}_{2,l}=\psi_{ \lambda_1/\mu_2^{(j)}  }^{\gamma,\eta}(\xi_{5,l})$
		\State Compute $n^{(j+1)}_{2,l}=n^{(j)}_{2,l}+\mu^{(j)}_{2,l}(Wu^{(j+1)}-m_{2,l}^{(j+1)})$
		\State Compute $m^{(j+1)}_{3,l}$ by solving  the dual problem  for each of the color layers
		\State Compute $n^{(j+1)}_{3,l}=n^{(j)}_{3,l}+\mu^{(j)}_{3,l}(u^{(j+1)}-m_{3,l}^{(j+1)})$
		\State Update $\mu^{(j+1)}_{i,l}=\sigma_i\mu^{(j)}_{i,l},~i=1,2,3$
		\EndFor
		\State $j \gets j+1$
		\Until{a stopping criterion is satisfied}
		\State \textbf{Output:} Deblurred and denoised image $u$
		
	\end{algorithmic}
\end{algorithm}

\section{Convergence Analysis}\label{sec4}
In this section, we study the convergence of the proposed algorithm. If the proposed model 
had a convex structure, the convergence of the ADMM method could be readily established. 
Similarly, if the proposed model were coercive, we could determine convergence using an 
approach analogous to that of \cite{nn17}. However, the nonconvexity and noncoercivity of 
the proposed model necessitate a different approach to analyzing convergence. In the 
following section, we study the convergence of the proposed method using a technique 
similar to the one presented in \cite{nn7,nn18}.\\

\begin{lem}\label{lmc1}
Let $\{u^{(j)}, \boldsymbol{m}^{(j)}, \boldsymbol{n}^{(j)}\}$ be a sequence generated 
by the proposed algorithm. Then $\{n_i^{(j)}\}_j,~(i=1,2,3,4)$ and $\{L^{(j)}\}_j$ are bounded.
\end{lem}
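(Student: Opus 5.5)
We plan to bound the multipliers through the optimality conditions of the $m$-subproblems, and then bound the augmented Lagrangian through a telescoping argument that exploits the geometric growth $\mu_i^{(j+1)}=\sigma_i\mu_i^{(j)}$ with $\sigma_i>1$. The index $i=4$ in the statement does not correspond to any multiplier defined in the algorithm, so we treat the claim for $i=1,2,3$.

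\textbf{Step 1: the multipliers $n_1,n_2$.} Note that in \eqref{eq3} the multipliers enter as $\mu_i\langle\cdot,n_i\rangle$. With this scaling, the first-order condition for \eqref{eq4} at $m_1^{(j+1)}$, combined with the update \eqref{eq10}, gives
\[
\mu_1^{(j)} n_1^{(j+1)}\in \partial \|m_1^{(j+1)}\|_{\mathrm{MCP}},
\]
where $\partial$ is the limiting subdifferential. This holds because $\mu_1^{(j)}(\xi_4-m_1^{(j+1)})=\mu_1^{(j)}(Ku^{(j+1)}-f-m_1^{(j+1)})+n_1^{(j)}$. Since $h_{\gamma,\eta}$ is Lipschitz with constant $\gamma$, every element of its subdifferential has modulus at most $\gamma$ componentwise. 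Hence $\|\mu_1^{(j)}n_1^{(j+1)}\|_\infty\le\gamma$, and because $\mu_1^{(j)}\ge\mu_1^{(0)}>0$ is increasing, $n_1^{(j)}$ is bounded, and in fact tends to $0$. The same argument applied to \eqref{eq6} and \eqref{eq12} gives $\|\mu_2^{(j)}n_2^{(j+1)}\|_\infty\le\lambda_1\gamma$.

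\textbf{Step 2: the multiplier $n_3$.} We use the FTV term: $\|\cdot\|_{\mathrm{FTV}}$ is convex and Lipschitz with constant $\|\nabla^\alpha\|$. By Lemma \ref{lem2}, its subgradients have the form $(\nabla^\alpha)^Tp$ with $p\in\Omega$, so they are bounded by $\|\nabla^\alpha\|\sqrt{nm}$. Assuming the inner primal--dual loop solves \eqref{eq5} exactly (or to a summable tolerance), we get $\mu_3^{(j)}n_3^{(j+1)}\in\lambda_2\partial\|m_3^{(j+1)}\|_{\mathrm{FTV}}$, and boundedness follows as before.

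\textbf{Step 3: the Lagrangian.} Write $L^{(j)}=L(u^{(j)},\boldsymbol m^{(j)},\boldsymbol n^{(j)},\boldsymbol\mu^{(j)})$. Since $u$ and each $m_i$ are exact minimizers of their subproblems, $L$ does not increase along the $u$- and $m$-updates. The multiplier and penalty updates change $L$ by
\[
\sum_i\Big(\mu_i^{(j+1)}\langle r_i,n_i^{(j+1)}\rangle-\mu_i^{(j)}\langle r_i,n_i^{(j)}\rangle+\tfrac{\mu_i^{(j+1)}-\mu_i^{(j)}}{2}\|r_i\|^2\Big),
\]
where $r_i$ is the residual at step $j+1$. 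The update formulas give $r_i=(n_i^{(j+1)}-n_i^{(j)})/\mu_i^{(j)}$. Substituting this and using Steps 1--2, namely $\|n_i^{(j)}\|\le C/\mu_i^{(j-1)}$, each increment is bounded by
\[
C'\,\frac{\sigma_i+1}{\mu_i^{(j)}}\le \frac{C''}{\sigma_i^{\,j}}.
\]
These bounds are summable because $\sigma_i>1$, so $L^{(j)}\le L^{(0)}+\sum_j C''\sigma_{\min}^{-j}<\infty$.

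For a lower bound, all regularizers are nonnegative. Completing the square gives
\[
\mu_i\langle r,n_i\rangle+\tfrac{\mu_i}{2}\|r\|^2\ge-\tfrac{\mu_i}{2}\|n_i\|^2,
\]
and the right-hand side is bounded below since $\mu_i\|n_i\|^2\lesssim 1/\mu_i$. Hence $L^{(j)}$ is bounded.

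\textbf{Main obstacle.} The delicate point is Step 3's bookkeeping: the increment must be controlled despite $\mu_i$ growing without bound. This works only because Step 1 gives the sharper bound $\|n_i\|=O(1/\mu_i)$ rather than mere boundedness. A further caveat is the inexact inner solve for $m_3$, which I would handle by assuming the tolerances are summable.
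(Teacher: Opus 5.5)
\paragraph{Gap in Step 1.} The error is in Step 1, and Step 3 depends on it. Your own identity already gives the correct inclusion. By \eqref{eq10}, $\mu_1^{(j)}(\xi_4-m_1^{(j+1)})=\mu_1^{(j)}(Ku^{(j+1)}-f-m_1^{(j+1)})+n_1^{(j)}=n_1^{(j+1)}$. So the optimality condition for \eqref{eq4} yields $n_1^{(j+1)}\in\partial\|m_1^{(j+1)}\|_{\mathrm{MCP}}$, not $\mu_1^{(j)}n_1^{(j+1)}\in\partial\|m_1^{(j+1)}\|_{\mathrm{MCP}}$. In the same way, $n_2^{(j+1)}\in\lambda_1\partial\|m_2^{(j+1)}\|_{\mathrm{MCP}}$ and $n_3^{(j+1)}\in\lambda_2\partial\|m_3^{(j+1)}\|_{\mathrm{FTV}}$.

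This is the paper's inclusion, and it gives $\|n_1^{(j+1)}\|_\infty\le\gamma$ and nothing more. At an entry with $0<|(m_1^{(j+1)})_k|<\gamma\eta$ it forces $|(n_1^{(j+1)})_k|=\gamma-|(m_1^{(j+1)})_k|/\eta$, which is close to $\gamma$ when the entry is small. Hence the rate $\|n_i^{(j)}\|=O(1/\mu_i^{(j)})$ is false for this algorithm, and you yourself say it is the only thing that makes Step 3 work.

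Without that rate your scaled bookkeeping fails. With $r_i=(n_i^{(j+1)}-n_i^{(j)})/\mu_i^{(j)}$:
\begin{itemize}
\item the multiplier step changes your $L$ by $\mu_i^{(j)}\langle r_i,n_i^{(j+1)}-n_i^{(j)}\rangle=\|n_i^{(j+1)}-n_i^{(j)}\|_2^2$;
\item the penalty step changes it by $(\mu_i^{(j+1)}-\mu_i^{(j)})\bigl(\langle r_i,n_i^{(j+1)}\rangle+\tfrac12\|r_i\|_2^2\bigr)$.
\end{itemize}
Both are only $O(1)$, so the telescoped sum is not controlled. Your lower bound $-\tfrac{\mu_i}{2}\|n_i\|_2^2$ also stops being uniform.

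There is a second mismatch. With the literal weight $\mu_i\langle\cdot,n_i\rangle$ of \eqref{eq3}, the updates \eqref{eq2}, \eqref{eq4}, \eqref{eq6}, \eqref{eq5} use data $\xi$ containing $n_i/\mu_i$, so they are not minimizers of that $L$. The monotonicity you invoke along the $u$- and $m$-steps therefore fails too. Your argument is internally consistent only for a fully scaled ADMM ($\xi_4=Ku-f+n_1$, $n_1^{+}=n_1+r_1$), which is not the algorithm the lemma is about.

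\paragraph{Repair.} Take the Lagrangian in the unscaled form $\langle r_i,n_i\rangle+\tfrac{\mu_i}{2}\|r_i\|_2^2$. This is what the subproblems, the updates \eqref{eq10}--\eqref{eq15} and the paper's own \eqref{eq20} actually use; the factor $\mu_i$ in front of the inner products in \eqref{eq3} is inconsistent with the rest of the paper. With this form, boundedness of the multipliers suffices, which is the paper's route:
\begin{itemize}
\item the multiplier step adds $\frac{1}{\mu_i^{(j)}}\|n_i^{(j+1)}-n_i^{(j)}\|_2^2$;
\item the penalty step adds $\frac{\mu_i^{(j+1)}-\mu_i^{(j)}}{2(\mu_i^{(j)})^2}\|n_i^{(j+1)}-n_i^{(j)}\|_2^2$;
\item both are $O(\sigma_i^{-j})$, hence summable;
\item completing the square gives the lower bound $-\sum_i\|n_i^{(j)}\|_2^2/(2\mu_i^{(j)})$.
\end{itemize}

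One part of your plan is worth keeping. You bound $n_3$ via the $m_3$-subproblem, i.e.\ $n_3^{(j+1)}\in\lambda_2\partial\|m_3^{(j+1)}\|_{\mathrm{FTV}}$ with bounded FTV subgradients. The paper instead uses the $u$-optimality relation \eqref{eq17}, and your route is the sounder one. Because $u$ is updated before the $m_i$, the exact $u$-optimality condition contains the extra terms $\mu_1^{(j)}K^T(m_1^{(j+1)}-m_1^{(j)})+\mu_2^{(j)}W^T(m_2^{(j+1)}-m_2^{(j)})+\mu_3^{(j)}(m_3^{(j+1)}-m_3^{(j)})$, which \eqref{eq17} drops.
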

\begin{proof}
Applying the first-order optimality condition to the $m_1$ subproblem yields
\begin{align*}
0 \in \frac{1}{\mu^{(j)}_1}\partial \| m^{(j+1)}_1 \|_{\mathrm{MCP}}+\big(m_1^{(j+1)}
-(Ku^{(j+1)}-f+\frac{1}{\mu^{(j)}_1}n^{(j)}_1)\big).
\end{align*}
Now by \eqref{eq10} and above formulae, we obtain:
\begin{align*}
n^{(j+1)}_1\in \partial \| m^{(j+1)}_1 \|_{\mathrm{MCP}}.
\end{align*}
Since the subgradient of MCP function is bounded, then 
$\|n^{(j+1)}_1\|_{\infty} \leq \gamma$. Similarly for $m_2$ subproblem, we get:
\begin{align*}
n^{(j+1)}_2 \in \lambda_1 \partial \|m^{(j+1)}_2\|_\mathrm{MCP}.
\end{align*}
Then by the boundedness of the MCP function, we can say 
$\|n^{(j+1)}_2\|_{\infty} \leq \lambda_1 \gamma$. 
In the next step, in order to check the boundedness of the $n^{(j+1)}_3$, by using othe first-order optimality condition 
of the \eqref{eq2}, we obtain:
\begin{align}\label{eq17}
K^Tn^{(j+1)}_1+W^Tn^{(j+1)}_2+n^{(j+1)}_3=0.
\end{align}
By taking infity norm of \eqref{eq17}, and applying the trianle inequality, we obtain
\begin{align*}
\|n^{(j+1)}_3\|_{\infty} \leq \|K^T\|_{\infty}\|n^{(j+1)}_1\|_{\infty}+\|W^T\|_{\infty}\|n^{(j+1)}_2\|_\infty.
\end{align*}
Therefore, the above relation implies the boundedness of $n^{(j+1)}_3$.\\
Now, we study the boundedness of $L^{(j)}$. 
By considering the structure of $L^{(j)}$, we can write:
\begin{align}\nonumber
L(u^{(j+1)},\boldsymbol{m}^{(j+1)},\boldsymbol{n}^{(j+1)},\boldsymbol{\mu}^{(j+1)})
	&-L(u^{(j+1)},\boldsymbol{m}^{(j+1)},\boldsymbol{n}^{(j+1)},\boldsymbol{\mu}^{(j)})\\\label{eq18}
&=\sum_{i=1}^{3}\frac{\mu^{(j+1)}_i-\mu^{(j)}_i}{2(\mu^{(j)})^2}\|n^{(j+1)}_i-n^{(j)}_i\|^2_2,
\end{align}
\begin{align}\nonumber
	L(u^{(j+1)},\boldsymbol{m}^{(j+1)},\boldsymbol{n}^{(j+1)},\boldsymbol{\mu}^{(j)})
	&-L(u^{(j+1)},\boldsymbol{m}^{(j+1)},\boldsymbol{n}^{(j)},\boldsymbol{\mu}^{(j)})\\\label{19}
	&=\sum_{i=1}^{3}\frac{1}{\mu^{(j)}}\|n^{(j+1)}_i-n^{(j)}_i\|^2_2.
\end{align}
Considering the subproblems used for $m^{(j+1)}_i,~i=1,2,3$, we have:
\begin{align}\label{20}
&L(u^{(j+1)},m^{(j+1)}_1,m^{(j+1)}_2,m^{(j+1)}_3,\boldsymbol{n}^{(j)},\boldsymbol{\mu}^{(j)})
-L(u^{(j+1)},m^{(j+1)}_1,m^{(j+1)}_2,m^{(j)}_3,\boldsymbol{n}^{(j+1)},\boldsymbol{\mu}^{(j)}) \leq 0,\\\label{21}
&L(u^{(j+1)},m^{(j+1)}_1,m^{(j+1)}_2,m^{(j)}_3,\boldsymbol{n}^{(j)},\boldsymbol{\mu}^{(j)})
-L(u^{(j+1)},m^{(j+1)}_1,m^{(j)}_2,m^{(j)}_3,\boldsymbol{n}^{(j+1)},\boldsymbol{\mu}^{(j)}) \leq 0,\\\label{22}
&L(u^{(j+1)},m^{(j+1)}_1,m^{(j)}_2,m^{(j)}_3,\boldsymbol{n}^{(j)},\boldsymbol{\mu}^{(j)})
-L(u^{(j+1)},m^{(j)}_1,m^{(j)}_2,m^{(j)}_3,\boldsymbol{n}^{(j+1)},\boldsymbol{\mu}^{(j)}) \leq 0.
\end{align}
By  taking the second derivative of the objective function of $u$ subproblem, we obtain the Hessian matrix as :
\begin{align*}
\mathbf{H}=\mu_1K^TK+\mu_2W^TW+\mu_3I.
\end{align*}
Therefore, we get:
\begin{align*}
x^T \mathbf{H} x= \mu_1 \|Kx\|^2_2+\mu_2\|Wx\|^2_2+\mu_3 \|x\|^2_2>0,\quad \forall x \in \mathbb{R}^n \setminus \{0\}.
\end{align*}
So, this Hessian matrix is strictly positive definite. 
Then the $u$ subproblem is strongly convex. Consequently, we can find 
a positive constant $C>0$ that

\begin{align}\label{eq24}
L(u^{(j+1)},\boldsymbol{m}^{(j)},\boldsymbol{n}^{(j)},\boldsymbol{\mu}^{(j)})
-L(u^{(j)},\boldsymbol{m}^{(j)},\boldsymbol{n}^{(j)},\boldsymbol{\mu}^{(j)})
\leq -\frac{C}{2}\|u^{(j)}-u^{(j)}\|^2_2.
\end{align}
By combining \eqref{eq18}-\eqref{eq24}, we obtain:
\begin{align}\nonumber
	L(u^{(j+1)},\boldsymbol{m}^{(j+1)},\boldsymbol{n}^{(j+1)},\boldsymbol{\mu}^{(j+1)})
	&-L(u^{(j)},\boldsymbol{m}^{(j)},\boldsymbol{n}^{(j)},\boldsymbol{\mu}^{(j)})\\ \label{eq25}
	&\leq \sum_{i=1}^{3}\frac{\mu_i^{(j+1)}+\mu_i^{(j)}}{2(\mu_i^{(j)})^2}
	\|n^{(j+1)}_i-n^{(j)}\|^2_2
	 -\frac{C}{2}\|u^{(j)}-u^{(j)}\|^2_2.
\end{align}
In the first steps of the proof, we showed that 
$\{n^{j}_i\}_j,~(i=1,2,3)$ are bounded. Then for $i=1,2,3$, we can find $M_i$ such that
$\|n^{j+1}_i-n^{(j)}_i\|^2_2\leq M_i$. So, by considering \eqref{eq16} and \eqref{eq25}, we get:
\begin{align}\nonumber
L(u^{(j+1)},\boldsymbol{m}^{(j+1)},\boldsymbol{n}^{(j+1)},\boldsymbol{\mu}^{(j+1)})
&-L(u^{(0)},\boldsymbol{m}^{(0)},\boldsymbol{n}^{(0)},\boldsymbol{\mu}^{(0)})\\ \label{eq27}
&\leq \sum_{i=1}^{3}\Big[\frac{(\sigma_i+1)M_i}{2\mu_i^{(0)}}
\big(\frac{1-(\frac{1}{\sigma_i})^j}{1-\frac{1}{\sigma_i}}\big) \Big ]
-\frac{C}{2}\sum^{j}_{i=0}\|u^{(i+1)}-u^{(i)}\|^2_2.
\end{align}
Since $\sigma_i>1, (i=1,2,3)$, when $j\rightarrow \infty$, we get:
\begin{align}\label{eq19}
\lim_{j\rightarrow \infty}&L(u^{(j+1)},\boldsymbol{m}^{(j+1)},\boldsymbol{n}^{(j+1)},\boldsymbol{\mu}^{(j+1)})
	\leq L(u^{(0)},\boldsymbol{m}^{(0)},\boldsymbol{n}^{(0)},\boldsymbol{\mu}^{(0)})+
	 \sum_{i=1}^{3}\Big[\frac{\sigma_i(\sigma_i+1)M_i}{2(\sigma_i-1)\mu_i^{(0)}}
	 \Big ]< \infty.
\end{align}
Therefore, the sequence $\{L^{(j)}\}_j$ is bounded above. Next, we show that it is also bounded below.
By using properties of norm, we have:
\begin{align}\nonumber
L(u^{(j+1)},\boldsymbol{m}^{(j+1)}&,\boldsymbol{n}^{(j+1)},\boldsymbol{\mu}^{(j+1)})
=\|m_1^{(j+1)}\|_\mathrm{MCP}+\lambda_1 \| m_2\|_\mathrm{MCP}+\lambda_2 \|m_3\|_\mathrm{FTV}\\\nonumber
&+\frac{\mu^{(j+1)}_1}{2}\|Ku^{(j+1)}-f-m^{(j+1)}_1+\frac{1}{\mu^{(j+1)}_1}n^{(j+1)}_1\|^2_2-\frac{\|n^{(j+1)}_1\|^2_2}{2\mu^{(j+1)}_1}\\ \nonumber
&+\frac{\mu^{(j+1)}_2}{2}\|Wu^{(j+1)}-m^{(j+1)}_2+\frac{1}{\mu^{(j+1)}_2}n^{(j+1)}_2\|^2_2-\frac{\|n^{(j+1)}_2\|^2_2}{2\mu^{(j+1)}_2}\\
\label{eq20}
&+\frac{\mu^{(j+1)}_3}{2}\|u^{(j+1)}-m^{(j+1)}_3+\frac{1}{\mu^{(j+1)}_3}n^{(j+1)}_3\|^2_2-\frac{\|n^{(j+1)}_3\|^2_2}{2\mu^{(j+1)}_3}.
\end{align}
Then, by taking $j\rightarrow \infty$ and considering that the values of the norms are positive, we obtain
\begin{align*}
\lim_{j\rightarrow \infty}
L(u^{(j+1)},\boldsymbol{m}^{(j+1)}&,\boldsymbol{n}^{(j+1)},\boldsymbol{\mu}^{(j+1)})
\geq \lim_{j\rightarrow \infty} -\sum^{3}_{i=1}\frac{\|n^{(j+1)}_i\|^2_2}{2\mu^{(j+1)}_i}>-\infty.
\end{align*}
That completes the proof.
\end{proof}
\begin{lem}\label{lmc2}
The sequences $\{u^{(j)}\}$ and $\{m^{(j)}_i\},~(i=1,2,3)$ generated by algorithm are bounded if the 
MCP parameters $\gamma$, $\eta$ and regularization parameter 
$\lambda$ satisfies:
\begin{align}\label{eq26}
L(u^{(0)},\boldsymbol{m}^{(0)},\boldsymbol{n}^{(0)},\boldsymbol{\mu}^{(0)})+
\sum_{i=1}^{3}\Big[\frac{\sigma_i(\sigma_i+1)M_i}{2(\sigma_i-1)\mu_i^{(0)}}
\Big ]<\frac{1}{2}\gamma^2 \eta.
\end{align}
\end{lem}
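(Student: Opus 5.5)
The argument runs through the upper bound on the Lagrangian from Lemma \ref{lmc1}. Write $B$ for the left-hand side of \eqref{eq26}, so by \eqref{eq27} and \eqref{eq19} we have $L^{(j)}\le B<\frac12\gamma^2\eta$ for every $j$. We will rewrite $L^{(j)}$ in the completed-square form \eqref{eq20}. Each term there is nonnegative except the multiplier terms $-\|n_i^{(j)}\|_2^2/(2\mu_i^{(j)})$. These are bounded in absolute value by a constant that tends to $0$, because $n_i^{(j)}$ is bounded by Lemma \ref{lmc1} and $\mu_i^{(j)}=\sigma_i^j\mu_i^{(0)}\to\infty$. It follows that every nonnegative piece of $L^{(j)}$ is bounded by $B+c$, with $c$ a fixed constant. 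In particular this holds for $\|m_1^{(j)}\|_{\mathrm{MCP}}$, $\lambda_1\|m_2^{(j)}\|_{\mathrm{MCP}}$, $\lambda_2\|m_3^{(j)}\|_{\mathrm{FTV}}$, and for the penalty terms $\frac{\mu_i^{(j)}}{2}\|\cdot\|_2^2$.

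The key observation concerns the MCP. For $|x|\ge\gamma\eta$ the scalar function saturates at $h_{\gamma,\eta}(x)=\frac12\gamma^2\eta$, and it is strictly increasing on $[0,\gamma\eta]$. So if $\|m_1^{(j)}\|_{\mathrm{MCP}}<\frac12\gamma^2\eta$, no single component of $m_1^{(j)}$ can reach saturation. Hence $\|m_1^{(j)}\|_\infty<\gamma\eta$, which bounds $m_1^{(j)}$ uniformly. To obtain this strict inequality, we will use the fact that the multiplier terms are nonpositive and vanish in the limit, together with the strict margin in \eqref{eq26}. For all $j$ large enough this gives $\|m_1^{(j)}\|_{\mathrm{MCP}}\le B+o(1)<\frac12\gamma^2\eta$. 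The finitely many earlier iterates are trivially bounded. The same argument applied to $\lambda_1\|m_2^{(j)}\|_{\mathrm{MCP}}$ bounds $m_2^{(j)}$, possibly after absorbing $\lambda_1$ into the condition, which is presumably what the statement's mention of ``$\lambda$'' refers to.

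Next we bound $u^{(j)}$ using the penalty terms. Since $\frac{\mu_1^{(j)}}{2}\|Ku^{(j)}-f-m_1^{(j)}+n_1^{(j)}/\mu_1^{(j)}\|_2^2\le B+c$ and $\mu_1^{(j)}\ge\mu_1^{(0)}$, the vector $Ku^{(j)}$ stays bounded. The same reasoning shows that $Wu^{(j)}-m_2^{(j)}$ is bounded, and hence so is $Wu^{(j)}$. Because the framelet transform satisfies $W^TW=I$, as used in \eqref{eq8}, this gives $\|u^{(j)}\|_2=\|Wu^{(j)}\|_2$ bounded. This route avoids any assumption on the invertibility of $K$. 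Finally, the third penalty term gives $\|u^{(j)}-m_3^{(j)}\|$ bounded, and therefore $m_3^{(j)}$ is bounded.

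The main obstacle is the MCP saturation step. The MCP value bounds a sum of saturating terms, so its boundedness alone does not control the vector. The argument only works because the threshold $\frac12\gamma^2\eta$ is exactly the saturation value of a single coordinate, and this is why the hypothesis \eqref{eq26} is needed. Care is also required to make the inequality strict uniformly in $j$, by handling the sign and decay of the $-\|n_i\|^2/(2\mu_i)$ terms, and to keep track of the factor $\lambda_1$ for $m_2$.
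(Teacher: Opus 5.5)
Your treatment of $m_1$ is the paper's saturation argument, and your version is more careful: you do not assume $\lim_j L^{(j)}$ exists, and you get the explicit bound $\|m_1^{(j)}\|_\infty<\gamma\eta$ for large $j$. The gap is in the step for $m_2$, and your whole bound on $u$ rests on it.

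The Lagrangian bound only gives $\lambda_1\|m_2^{(j)}\|_{\mathrm{MCP}}\le B+o(1)$, i.e.\ $\|m_2^{(j)}\|_{\mathrm{MCP}}<\gamma^2\eta/(2\lambda_1)$ for large $j$. If $\lambda_1<1$, this threshold exceeds the single-coordinate saturation value $\frac12\gamma^2\eta$. A coordinate of $m_2^{(j)}$ can then sit in the flat region $|x|\ge\gamma\eta$ and drift to infinity without contradicting anything, and \eqref{eq26} does not exclude this. ``Absorbing $\lambda_1$ into the condition'' means replacing \eqref{eq26} by $B<\frac12\lambda_1\gamma^2\eta$, which is a strictly stronger hypothesis. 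As written, you prove the lemma only when $\lambda_1\ge 1$.

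The paper never bounds $m_2$ through the MCP. It goes from $m_1$ directly to $u$ via \eqref{eq10}, which makes $Ku^{(j+1)}=f+m_1^{(j+1)}+(n_1^{(j+1)}-n_1^{(j)})/\mu_1^{(j)}$ bounded. It then reads off $m_2^{(j+1)}$ and $m_3^{(j+1)}$ from \eqref{eq12} and \eqref{eq15}. Passing from $Ku$ to $u$ there tacitly requires $\ker K=\{0\}$, so you were right to distrust that step. However, your detour through $W$ trades that assumption for a different unstated one rather than removing it.

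A repair inside your own framework needs only the standard condition $\ker K\cap\ker\nabla^{\alpha}=\{0\}$. This holds, e.g., for $\alpha=1$ with a normalized PSF, since $K$ does not annihilate nonzero constants.
\begin{itemize}
\item Use your bound $\lambda_2\|m_3^{(j)}\|_{\mathrm{FTV}}\le B+c$, the boundedness of $u^{(j)}-m_3^{(j)}$, and the seminorm triangle inequality to bound $\|u^{(j)}\|_{\mathrm{FTV}}$, and hence $\nabla^{\alpha}u^{(j)}$.
\item Combine this with the boundedness of $Ku^{(j)}$ and the injectivity of $u\mapsto(Ku,\nabla^{\alpha}u)$ on a finite-dimensional space to bound $u^{(j)}$.
\item After that, $m_2$ and $m_3$ follow as in the paper.
\end{itemize}
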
 
\begin{proof}
Since $\lim_{j\rightarrow \infty}\frac{\|n^{(j+1)}_i\|^2_2}{2\mu^{(j+1)}_i}=0$, then by using
\eqref{eq19} and \eqref{20} and lemma's assumption, we get
\begin{align}\label{eq21}
\lim_{j \rightarrow \infty }\|m^{(j+1)}_1\|_\mathrm{MCP} \leq  \lim_{j \rightarrow \infty }
L(u^{(j+1)},\boldsymbol{m}^{(j+1)},\boldsymbol{n}^{(j+1)},\boldsymbol{\mu}^{(j+1)}) < \frac{1}{2}\gamma^2 \eta.
\end{align} 
Now let $\{m_1^{(j)}\}$ be unbounded. Therefore, when 
$j \rightarrow \infty$, we can find at least one coordinate component
$i$ such that $|(m_1^{(j+1)})_i |\rightarrow \infty$. The, using the definition 
of MCP function, we can say that 
$h_{\gamma,\lambda}((m_1^{(j+1)})_i)=\frac{1}{2}\gamma^2 \eta$. This is due to:
\begin{align*}
\| m_1^{(j+1)})\|_\mathrm{MCP} \geq h_{\gamma,\lambda}((m_1^{(j+1)})_i)=\frac{1}{2}\gamma^2 \eta.
\end{align*}
By taking the limit we have
\begin{align*}
\lim_{j \rightarrow \infty } \| m_1^{(j+1)})\|_\mathrm{MCP} \geq \frac{1}{2}\gamma^2 \eta,	
\end{align*}
where
contradicts with the upper boundary condition \eqref{eq21}.
The boundedness of $\{u^{(j)}\}$ and $\{m^{(j)}_i\},~(i=2,3)$ follows from 
\eqref{eq10}-\eqref{eq15} and the boundedness of $\{n^{(j)}_i\}$ and $\{m^{(j)}_1\}$.
\end{proof}

\begin{thm}
Let the MCP parameters 
 $\gamma$ and $\eta$ satisfy  \eqref{eq26}. Then
sequence $\{X^{(j)}\}:=\{u^{(j)}, \boldsymbol{m}^{(j)},\boldsymbol{n}^{(j)}\}$
is bounded and the following hold:
\begin{align*}
\lim_{j\rightarrow \infty} \|u^{(j+1)}-u^{(j)}\|_2=0,
\quad
\lim_{j\rightarrow \infty} \|m_i^{(j+1)}-m_i^{(j)}\|_2=0,
\quad 
i=1,2,3.
\end{align*}
In addition, there exists a subsequence of $\{X^{j}\}$ that converges to a stationary 
point 
$\{X^{\ast}\}:=\{u^{\ast}, \boldsymbol{m}^{\ast},\boldsymbol{n}^{\ast}\}$.
\end{thm}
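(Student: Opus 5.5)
Boundedness of $\{X^{(j)}\}$ is a direct combination of the two preceding lemmas. Lemma \ref{lmc1} bounds $\{n^{(j)}_i\}$ and $\{L^{(j)}\}$. Lemma \ref{lmc2}, under \eqref{eq26}, bounds $\{u^{(j)}\}$ and $\{m^{(j)}_i\}$. So the whole sequence lies in a compact set. By Bolzano--Weierstrass we can extract a subsequence $X^{(j_k)}\to X^{\ast}$, and the remaining work is to prove the vanishing of successive differences and to identify $X^{\ast}$ as stationary.

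\textbf{Vanishing differences.} For $u$, I would use the telescoped inequality \eqref{eq27}, now read with the intended term $\|u^{(j+1)}-u^{(j)}\|_2^2$ coming from strong convexity of the $u$-subproblem \eqref{eq24}. Its left-hand side is bounded below uniformly in $j$ by Lemma \ref{lmc1}. The geometric sum on the right is bounded by $\sum_i \sigma_i(\sigma_i+1)M_i/(2(\sigma_i-1)\mu_i^{(0)})$. Hence
\[
\frac{C}{2}\sum_{i=0}^{\infty}\|u^{(i+1)}-u^{(i)}\|_2^2<\infty,
\]
so $\|u^{(j+1)}-u^{(j)}\|_2\to 0$.

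For the $m_i$, I would not use the Lagrangian but the multiplier updates \eqref{eq10}--\eqref{eq15}, which give, for example,
\[
m_1^{(j+1)}=Ku^{(j+1)}-f-\frac{n_1^{(j+1)}-n_1^{(j)}}{\mu_1^{(j)}}.
\]
Subtracting consecutive instances yields $\|m_1^{(j+1)}-m_1^{(j)}\|_2\le \|K\|\,\|u^{(j+1)}-u^{(j)}\|_2+O(1/\mu_1^{(j-1)})$. The $O(\cdot)$ term uses boundedness of the $n_1^{(j)}$. Since $\mu_i^{(j)}=\sigma_i^j\mu_i^{(0)}\to\infty$, both terms vanish, and the same argument with $W$ and $I$ handles $m_2$ and $m_3$. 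This also gives primal feasibility in the limit: $\|Ku^{(j)}-f-m_1^{(j)}\|\le \|n_1^{(j)}-n_1^{(j-1)}\|/\mu_1^{(j-1)}\to 0$, and similarly for the other constraints.

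\textbf{Stationarity.} I would pass to the limit along $j_k$ in the optimality conditions derived in the proof of Lemma \ref{lmc1}:
\[
n_1^{(j+1)}\in\partial\|m_1^{(j+1)}\|_{\mathrm{MCP}},\qquad n_2^{(j+1)}\in\lambda_1\partial\|m_2^{(j+1)}\|_{\mathrm{MCP}},
\]
together with the analogous inclusion $n_3^{(j+1)}\in\lambda_2\partial\|m_3^{(j+1)}\|_{\mathrm{FTV}}$ from the $m_3$-subproblem and the linear relation \eqref{eq17}. Since the differences vanish, $X^{(j_k+1)}$ converges to the same $X^{\ast}$. \eqref{eq17} is linear, so it passes to the limit directly. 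Feasibility gives $m_1^{\ast}=Ku^{\ast}-f$, $m_2^{\ast}=Wu^{\ast}$ and $m_3^{\ast}=u^{\ast}$. Together these are the KKT conditions of \eqref{eq1}, i.e. $X^{\ast}$ is a stationary point.

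\textbf{Main obstacle.} The hard step is the closedness of the subdifferential inclusions. MCP is nonconvex, so I would work with the limiting (Mordukhovich) subdifferential and invoke its outer semicontinuity, i.e. closedness of its graph. Along a convergent sequence, MCP values converge because the function is continuous, so limits of subgradients remain subgradients. For FTV the function is convex and continuous, so the closedness is standard. A secondary obstacle is that the $m_3$-subproblem is solved only approximately by the primal-dual inner loop. I would assume it is solved exactly, or to a summable accuracy, so that the inclusion for $n_3$ holds up to an error tending to zero.
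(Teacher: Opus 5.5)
Your argument follows the paper's proof step for step. Boundedness comes from Lemmas \ref{lmc1} and \ref{lmc2}, square-summability of $u^{(j+1)}-u^{(j)}$ from \eqref{eq27}, and the $m_i$-differences and primal feasibility from the multiplier updates with $\mu_i^{(j)}\to\infty$. You then pass to the limit in the optimality conditions. Two of your choices are improvements on the paper: closedness of the limiting subdifferential of the continuous MCP is a cleaner substitute for the paper's lower-semicontinuity step, and the inclusion $n_3^{(j+1)}\in\lambda_2\partial\|m_3^{(j+1)}\|_{\mathrm{FTV}}$ is the right source of the bound on $n_3$.

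The genuine gap, which the paper's proof shares, is passing \eqref{eq17} to the limit. In Algorithm \ref{alg1} the $u$-step uses $\boldsymbol{m}^{(j)},\boldsymbol{n}^{(j)}$, while \eqref{eq10}--\eqref{eq15} use $\boldsymbol{m}^{(j+1)}$. Substituting the updates into the optimality condition of \eqref{eq2} gives
\begin{align*}
K^Tn_1^{(j+1)}+W^Tn_2^{(j+1)}+n_3^{(j+1)}
&=-\mu_1^{(j)}K^T\big(m_1^{(j+1)}-m_1^{(j)}\big)-\mu_2^{(j)}W^T\big(m_2^{(j+1)}-m_2^{(j)}\big)\\
&\quad-\mu_3^{(j)}\big(m_3^{(j+1)}-m_3^{(j)}\big),
\end{align*}
not \eqref{eq17}. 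Since $\mu_i^{(j)}=\sigma_i^j\mu_i^{(0)}\to\infty$, knowing $\|m_i^{(j+1)}-m_i^{(j)}\|_2\to0$ does not make the right-hand side vanish. Multiplying your own estimate by $\mu_1^{(j)}$ leaves $\mu_1^{(j)}\|K\|\,\|u^{(j+1)}-u^{(j)}\|_2$ plus an $O(\sigma_1)$ term from the bounded multiplier differences. Square-summability of the $u$-differences gives no $o(\sigma_1^{-j})$ rate. So $K^Tn_1^{\ast}+W^Tn_2^{\ast}+n_3^{\ast}=0$, the $u$-component of the KKT system, is not established. Updating $u$ last does not help: \eqref{eq17} then becomes exact, but the inclusion for $n_1^{(j+1)}$ acquires the error $\mu_1^{(j)}K(u^{(j+1)}-u^{(j)})$.

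This is a real obstruction, not a missing estimate. Run the same scheme ($u$-step first, multiplier update as in \eqref{eq15}, $\mu^{(j+1)}=\sigma\mu^{(j)}$) on $\min_u\frac12(u-a)^2$, written as $\min\frac12(m-a)^2$ subject to $m=u$. One finds $n^{(j+1)}=m^{(j+1)}-a$ and $m^{(j+1)}-a=\frac{\mu^{(j)}}{1+\mu^{(j)}}(m^{(j)}-a)$. Since $\sum_j 1/\mu^{(j)}<\infty$, the product $\prod_j\mu^{(j)}/(1+\mu^{(j)})$ is positive. Hence, whenever $m^{(0)}\neq a$, the whole sequence converges, with bounded multipliers and vanishing differences, to $u^{\ast}=m^{\ast}\neq a$, and $n^{\ast}\neq0$. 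To claim stationarity you therefore need an extra ingredient. One option is to show $\mu_i^{(j)}\|m_i^{(j+1)}-m_i^{(j)}\|_2\to0$ along the subsequence, though the present scheme need not provide this. The other is a different penalty schedule with a correspondingly different descent argument.

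A smaller point: only $\|n_i^{(j+1)}-n_i^{(j)}\|_2/\mu_i^{(j)}\to0$ is known, so $X^{(j_k+1)}\to X^{\ast}$ is unjustified in the $n$-component. Extract the convergent subsequence from $\{X^{(j+1)}\}$ directly, since each inclusion you pass to the limit involves a single index.
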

\begin{proof}
In the first part of the proof, the boundedness of the sequence $\{X^{(j)}\}$, follows from Lemmas \ref{lmc1} and \ref{lmc2}.
In the next step, using \eqref{eq27}, we get:
\begin{align*}
0\leq\frac{C}{2}\lim_{j \rightarrow \infty }\sum^{j}_{i=0}\|u^{(i+1)}-u^{(i)}\|^2_2 &
\leq L(u^{(0)},\boldsymbol{m}^{(0)},\boldsymbol{n}^{(0)},\boldsymbol{\mu}^{(0)})\\
&-\lim_{j \rightarrow \infty }L(u^{(j+1)},\boldsymbol{m}^{(j+1)},\boldsymbol{n}^{(j+1)},\boldsymbol{\mu}^{(j+1)})
\\
&+ \lim_{j \rightarrow \infty }\sum_{i=1}^{3}\Big[\frac{(\sigma_i+1)M_i}{2\mu_i^{(0)}}
\big(\frac{1-(\frac{1}{\sigma_i})^j}{1-\frac{1}{\sigma_i}}\big) \Big ]
< \infty.
\end{align*}
This is due to
\begin{align}\label{eq30}
\lim_{j \rightarrow \infty }\|u^{(i+1)}-u^{(i)}\|_2=0.
\end{align}
Considering the boundness of $\{n^{(j)_i}\},~(i=1,2,3)$ and using  \eqref{eq10} to \eqref{eq16}, we have
\begin{align}\label{eq29}
&\lim_{j \rightarrow \infty } \|Ku^{(j+1)}-f-m^{(j+1)}_1\|_2=\lim_{j \rightarrow \infty } \frac{\|n^{(j+1)}_1-n^{(j)}_1\|_2}{\mu^{j+1}_1}=0,\\\label{eq31}
&\lim_{j \rightarrow \infty } \|Wu^{(j+1)}-m^{(j+1)}_2\|_2=\lim_{j \rightarrow \infty } \frac{\|n^{(j+1)}_2-n^{(j)}_2\|_2}{\mu^{j+1}_2}=0,\\\label{eq32}
&\lim_{j \rightarrow \infty } \|u^{(j+1)}-m^{(j+1)}_3\|_2=\lim_{j \rightarrow \infty } \frac{\|n^{(j+1)}_3-n^{(j)}_3\|_2}{\mu^{j+1}_3}=0.
\end{align}
From the triangle inequality, we
obtain
\begin{align}\nonumber
0\leq\|m_1^{(j+1)}-m_1^{(j)}\|_2 \leq & \|Ku^{(j+1)}-f-m^{(j+1)}_1\|_2+\|Ku^{(j)}-f-m^{(j)}\|_2\\ \label{eq28}
&+\|K(u^{(j+1)}-u^{(j)})\|_2.
\end{align}	
Then, based on relations \eqref{eq30}, \eqref{eq29} and \eqref{eq28}, we can say that 
$\|m_1^{(j+1)}-m_1^{(j)}\|_2\rightarrow0$ as $j \rightarrow \infty$.
In a similar way, we can prove that for $i=2,3$,
$\|m^{(j+1)}_i-m^{(j)}_i\|_2\rightarrow 0$ as $j \rightarrow \infty$.\\
Since the sequence $\{X^j\}$ is bounded, there exists a convergent subsequence $\{X^{r_j}\}$ that converges to $\{X^\ast\}$. Also, since 
$L$ is the lower semicontinuous, we get
\begin{align*}
\lim_{j\rightarrow \infty} \inf
L(u^{(r_j+1)},m^{(r_j+1)}_1,m^{(k_j+1)}_2,
m^{(k_j)}_3,\boldsymbol{n}^{(k_j)},\boldsymbol{\mu}^{(k_j)}
)\geq 
L(X^{\ast},\boldsymbol{\mu}^{\infty}).
\end{align*}
Also, $m^{(k_j+1)}_2$ is obtained by minimizing $L$ with respect to
$m_1$, then we have 
\begin{align*}
	\lim_{j\rightarrow \infty} \sup
	L(u^{(r_j+1)},m^{(r_j+1)}_1,m^{(k_j+1)}_2,
	m^{(k_j)}_3,\boldsymbol{n}^{(k_j)},\boldsymbol{\mu}^{(k_j)}
	)\leq
	L(X^{\ast},\boldsymbol{\mu}^{\infty}).
\end{align*}
Then based on above relations, we can write:
\begin{align*}
	\lim_{j\rightarrow \infty}
	L(u^{(r_j+1)},m^{(r_j+1)}_1,m^{(k_j+1)}_2,
	m^{(k_j)}_3,\boldsymbol{n}^{(k_j)},\boldsymbol{\mu}^{(k_j)}
	)=
	L(X^{\ast},\boldsymbol{\mu}^{\infty}).
\end{align*}
Therefore we obtain:
\begin{align*}
&\lim_{j\rightarrow \infty}
\|m^{(r_j+1)}_1\|_{\mathrm{MCP}}=\|m^{\ast}_1 \|_{\mathrm{MCP}},
\quad
\lim_{j\rightarrow \infty}
\|m^{(r_j+1)}_2\|_{\mathrm{MCP}}=\|m^{\ast}_2 \|_{\mathrm{MCP}},
\\
&\lim_{j\rightarrow \infty}
\|m^{(r_j)}_3\|_{\mathrm{MCP}}=\|m^{\ast}_3 \|_{\mathrm{MCP}}.
\end{align*}
By using above limits and relations \eqref{eq29}-\eqref{eq32}, we obtain
\begin{align*}
Ku^{\ast}-f-m^{\ast}_1=0,\quad Wu^{\ast}-m^{\ast}_2=0, \quad u^{\ast}-m^{\ast}_3=0.
\end{align*}
Also by taking limit of the first-order optimality conditions of the auxiliary subproblems as $j\rightarrow \infty$, we get
\begin{align*}
& 0\in \partial \| m^{\ast}_1 \|_{\mathrm{MCP}}-n^{\ast}_1,
\quad
0 \in \lambda_1 \partial \|m^{\ast}_2\|_\mathrm{MCP}-n^{\ast}_2,
\quad
K^Tn^{\ast}_1+W^Tn^{\ast}_2+n^{\ast}_3=0.
\end{align*}
Also, we obtain
\begin{align*}
0 \in K^T\partial \|Ku^{\ast}-f\|_{\mathrm{MCP}}+\lambda_1W^T \partial\|Wu^{\ast}\|_{\mathrm{MCP}}+\lambda_2 \partial \|u^{\ast}\|_{\mathrm{FTV}}.
\end{align*}
Based on the above discussion, the Karush--Kuhn--Tucker (KKT) conditions of the optimization problem hold at the limit point $X^{\ast}$.
\end{proof}	

\section{Simulation and Analysis Results} \label{sec5}
In this section, we present and simulate the proposed algorithm for several examples.
For this purpose, a system with the following specifications is used:
MATLAB 2014, Windows 10 (64-bit), with an Intel Core i3-5005U CPU @ 2.00 GHz.
The images used in the examples are taken from the USC-SIPI image database\footnote{\url{https://sipi.usc.edu/database/}}.
To add noise to the image, we used the ``salt \& pepper'' and ``Random-valued'' noise types.
Since the noise generated in the simulation is random, different images may result on each run;
 to ensure reproducibility, we set the seed using the MATLAB command \texttt{rng(42)}.
In this paper, we use $AB(n)$ to denote the average PSF of size $n$, $MB(n,\theta)$ to denote the motion PSF of size  $n$ and angle 
 $\theta$, and
 $GB(n, \sigma)$ to denote the Gaussian PSF of size $n$ and $\sigma$ with standard deviation.
 It should be noted that to generate these PSF, we use the function \texttt{fspecial} provided by MATLAB.
 The proposed algorithm depends on various parameters. In the numerical results section, we select 
 $\alpha \in \{0.75,1,1.25,1.75\}$, $\mu_1 \in \{5,15,25,50,100\}$,
 $\mu_2,\mu_3, \gamma_1, \gamma_2\in \{1,0.1,0.01,0.001,0.0001\}$
 $\gamma\in {0.5,75}$,$\eta=3$
 $\sigma_i=1.01,i=1,2,3$,$\tau=0.02$ and $\nu=0.005$, and
 after calculations, we report the best outcome obtained as the final result. 
 
 We also consider the stopping condition for the presented algorithms as follows:
\begin{align*}
\frac{\|u^{(j+1)}-u^{(j)}\|_2}{\|u^{(j+1)} \|_2}\leq \mathrm{tol}.
\end{align*}
In all the examples, we set 
 $\mathrm{tol}=10^{-3}$.
 
Various metrics exist for comparing a reconstructed image with a noise-free image; in our results, we report the PSNR (Peak Signal-to-Noise Ratio), SNR (Signal-to-Noise Ratio), SSIM (Structural Similarity Index), and FSIM (Feature Similarity Index Measure).

\begin{example}\label{ex1}
	\normalfont
As the first test image, we consider \texttt{clock-5.1.12}, a 
$256 \times 256$-pixel grayscale image with 8-bit depth.
We use the $k_1=AB(5)$ and $k_2=MB(20,45)$ kernels for blurring.
In Table \ref{tab1}, the results obtained from the proposed method, by considering the salt \& pepper noise of various density, are compared with the algorithms presented in the papers by \cite{nn19} and \cite{nn20}.
Based on the results presented in this table, it can be observed that, although the proposed algorithm requires more time than the two comparison methods-a consequence of its inherent complexity-it yields better results.
Furthermore, Figure \ref{fig1} presents a comparison of the resulting output images. In this figure, the impact of the function and the noise on the image is clearly evident-to the extent that the original image has been obliterated and is no longer recognizable- whereas the proposed algorithm has successfully recovered the original image.
In this figure, the PSF is taken as $k_2$ and the salt \& pepper noise with a  density $10\% $ is used.
\end{example}

	\begin{figure}[ht!]
		\centering
		\makebox[0pt]{
			\subfigure[]{\label{fig:gull}\includegraphics[width=0.25\textwidth]{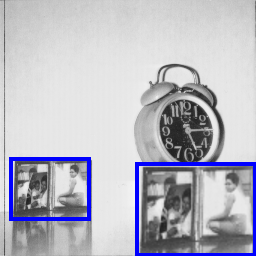}}
			\subfigure[]{\label{fig:tiger}\includegraphics[width=0.25\textwidth]{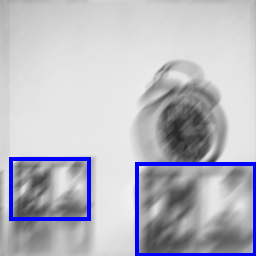}}
			\subfigure[]{\label{fig:tiger}\includegraphics[width=0.25\textwidth]{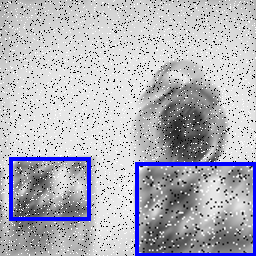}}
		}\\
		\vspace{-.2cm}
		\makebox[0pt]{
			\subfigure[]{\label{fig:gull}\includegraphics[width=0.25\textwidth]{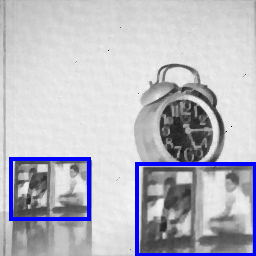}}
			\subfigure[]{\label{fig:tiger}\includegraphics[width=0.25\textwidth]{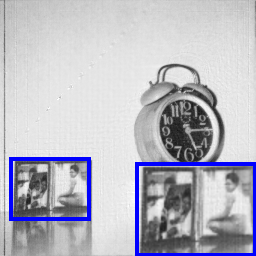}}
			\subfigure[] {\label{fig:tiger}\includegraphics[width=0.25\textwidth]{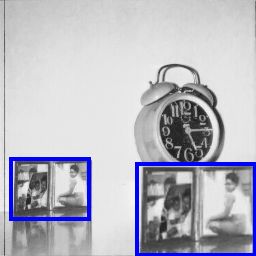}}
		}\\
		\vspace{-.3cm}
		\caption{(a) Clear image, (b) blurred image, (c) blurred and noisy image, and restored images by (d) the method in \cite{nn19}, (e) the method in \cite{nn20}, and (f) the proposed method.
			}
		\label{fig1}
	\end{figure}
	\begin{example}\label{ex2}
		\normalfont
		In the next example, we select
		\texttt{Female-4.1.01
		}, a 
		$256 \times 256$-pixel color image with 24 bits/pixel. First, we convert this image to grayscale using the \texttt{rgb2gray} command, and then we apply the proposed algorithm to it. For this example the PSF $MA(20,30)$ and the salt \& pepper noise with a  density $10\% $ are considered. 
		To compare the performance of the proposed algorithm with that of the methods in \cite{nn19} and \cite{nn20} over the same number of iterations, plots of the relative error and PSNR are presented in Figure \ref{fig2}. The results indicate that, for a given number of iterations, the proposed algorithm yields better results.
		To numerically examine the convergence of the proposed method, Figure \ref{fig3} has been plotted for this example. Based on the obtained results, the convergence of the proposed method can be observed.
		Additionally, Figure \ref{fig4} is presented to show the results obtained from restoring images corrupted by various types of noise.
	\end{example}

	\begin{center}
		\begin{table}[H]
			\caption{{ Numerical results for Example~\ref{ex1} with PSF $k_1$.}}
			\label{tab1}
			\centering
			\begin{tabular}{ccccccccccccccc}
				\hline
method   & density  &Iter &Time(s)    &PSNR    &SNR    &SSIM   &FSIM    \\
\hline
         &$10\%$    &61 &26.555   &40.608  &38.26 &0.98432 &0.98931\\
Proposed &$20\%$    &100 &43.551   &38.11  &35.761 &0.97892 &0.98381\\
         &$30\%$    &166 &72.214   &35.778 &33.430 &0.95107 &0.97283    \\
		 &$60\%$    &211&101.85   &26.156 &23.807 &0.83405 &0.84277    \\
				            \hline
         &$10\%$    &186 &4.0958   &36.919  &34.571 &0.94900 &0.96153\\				            
Method in \cite{nn19} &$20\%$    &60&1.3446&33.552&31.204&0.95802&0.95948\\
         &$30\%$    &69&1.4796&31.252&28.903&0.93941&0.93864    \\
		 &$60\%$    &202&4.4032&25.039 &22.691 &0.83319 &0.82609     \\
                             \hline
         &$10\%$    &127 &3.3321   &33.869  &31.521 &0.93782 &0.95901\\                             
Method in \cite{nn20} &$20\%$    &831 &19.435    &29.849  &27.500&0.84878 &0.90509\\
         &$30\%$    &947 &20.841    &29.070  &26.721&0.82911 &0.89210   \\
         &$60\%$    &1000&25.479    &24.675  &22.327&0.65569 &0.75837    \\
                             \hline
			\end{tabular}
		\end{table}
	\end{center}
\vspace{-1cm}	
\begin{figure}[H]
	\centering
	\makebox[0pt]{
		\subfigure[]{\label{fig:gull}\includegraphics[width=0.5\textwidth]{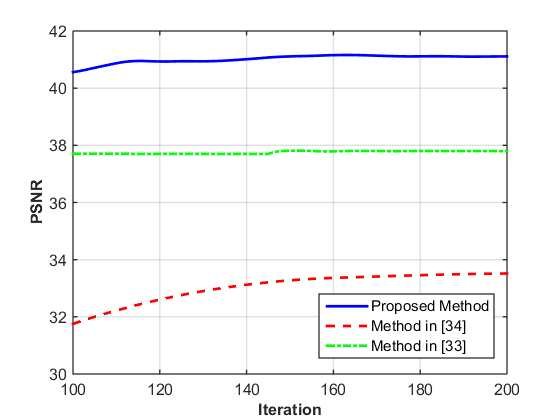}}
		\subfigure[]{\label{fig:tiger}\includegraphics[width=0.5\textwidth]{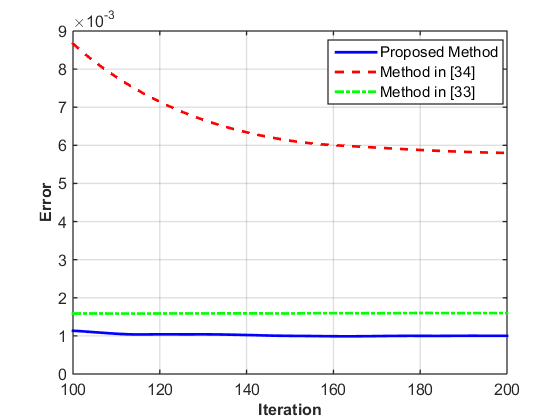}}
	}\\
	\vspace{-.3cm}
	\caption{Comparison results for (a) PSNR, (b) Relative Error.}
	\label{fig2}
\end{figure}	
\vspace{-1cm}	
\begin{figure}[H]
	\centering
	\makebox[0pt]{
		\subfigure[]{\label{fig:gull}\includegraphics[width=0.5\textwidth]{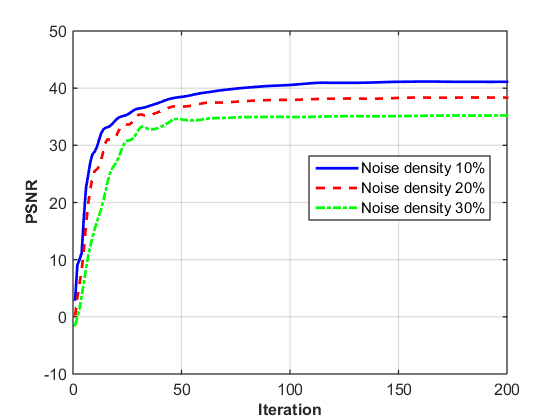}}
		\subfigure[]{\label{fig:tiger}\includegraphics[width=0.5\textwidth]{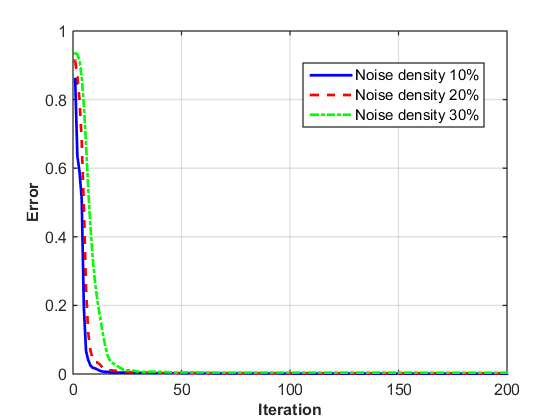}}
	}\\
	\vspace{-.3cm}
	\caption{Results for  different noisy density of the proposed method, (a) PSNR, (b) Relative error.}
	\label{fig3}
\end{figure}

	\begin{figure}[H]
	\centering
	\makebox[0pt]{
		\subfigure[]{\label{fig:gull}\includegraphics[width=0.25\textwidth]{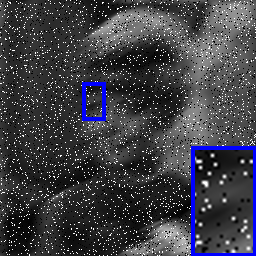}}
		\subfigure[]{\label{fig:tiger}\includegraphics[width=0.25\textwidth]{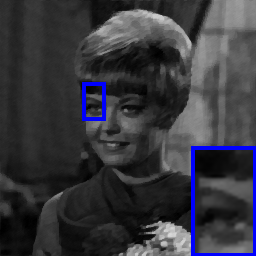}}
		\subfigure[]{\label{fig:tiger}\includegraphics[width=0.25\textwidth]{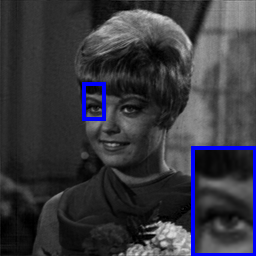}}
				\subfigure[]{\label{fig:tiger}\includegraphics[width=0.25\textwidth]{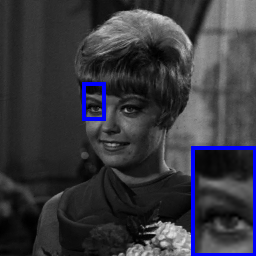}}
	}\\
	\vspace{-.2cm}
	\makebox[0pt]{
	\subfigure[]{\label{fig:gull}\includegraphics[width=0.25\textwidth]{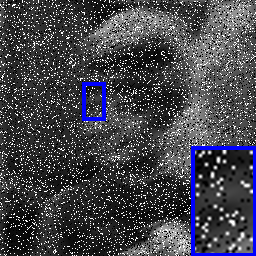}}
	\subfigure[]{\label{fig:tiger}\includegraphics[width=0.25\textwidth]{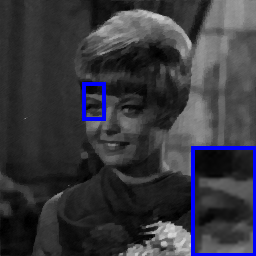}}
	\subfigure[]{\label{fig:tiger}\includegraphics[width=0.25\textwidth]{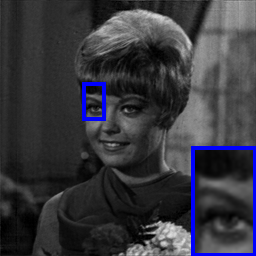}}
	\subfigure[]{\label{fig:tiger}\includegraphics[width=0.25\textwidth]{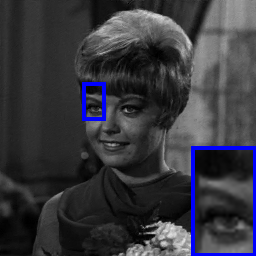}}
}\\
	\makebox[0pt]{
	\subfigure[]{\label{fig:gull}\includegraphics[width=0.25\textwidth]{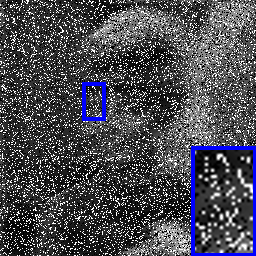}}
	\subfigure[]{\label{fig:tiger}\includegraphics[width=0.25\textwidth]{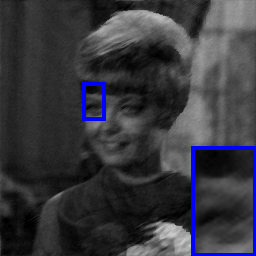}}
	\subfigure[]{\label{fig:tiger}\includegraphics[width=0.25\textwidth]{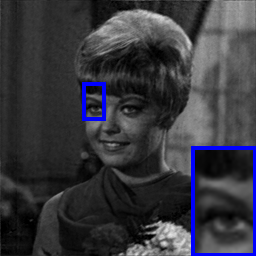}}
	\subfigure[]{\label{fig:tiger}\includegraphics[width=0.25\textwidth]{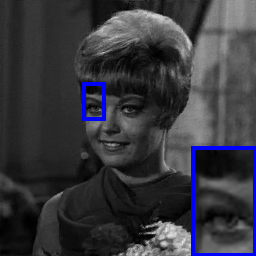}}
}
	\vspace{-.3cm}
	\caption{(a, e, i) Blurred and noisy images with noise densities of 
		$10\%, 
		20\%$ and 
		$30\%$, respectively; restored images obtained by (b, f, j) the method in \cite{nn19}, (c, g, k) the method in \cite{nn20}, and (d, h, l) the proposed method, each at the corresponding noise density.}
	\label{fig4}
\end{figure}	

\begin{example}\label{ex3}
	\normalfont	
In this example, we study the proposed algorithm for color images. For this purpose, we consider \texttt{House-4.1.05}, a 
$256 \times 256$-pixel color image with 24 bits/pixel, as the target image. Given that the algorithms proposed in references \cite{nn19} and \cite{nn20} are not designed for composite PSF, in this example, we apply a different blurring PSF to each layer of the image
as
\[K=\mathrm{diag}\big(AB(7),GA(11,5),MA(20,45)\big)\]
 and subsequently add random-valued noise with a different density. 
The results obtained are presented in Table \ref{tab2} and Figure \ref{fig5}. Based on these results, it can be observed that the numerical outcomes for the restoration of degraded images are satisfactory.
\end{example}	

\begin{center}
	\begin{table}[H]
		\caption{{ Numerical results for Example~\ref{ex3}.}}
		\label{tab2}
		\centering
		\begin{tabular}{ccccccccccccccc}
			\hline
			method   & density  &Iter &Time    &PSNR    &SNR    &SSIM   &FSIM    \\
			\hline
			&$10\%$     &130&244.1    &37.842 &33.228&0.99117 &0.97935 \\
			Proposed &$20\%$    &138&285.43    &36.888 &32.274&0.98874 & 0.97539\\
			&$30\%$    &181&289.6    &34.966 &30.352&0.98278 &0.96456    \\
			&$60\%$    &219&331.72    &25.85 &21.236&0.90005 &0.80558   \\
			\hline
			&$10\%$     &231&15.24    &34.16 &29.546&0.97949 &0.96086  \\				            
Method in \cite{nn19} &$20\%$   &84&    6.1039 &31.347&26.733  & 0.96657&0.92617 \\
			&$30\%$     &88& 6.4528    &30.121  &25.507& 0.95817&0.90848    \\
			&$60\%$    &295&23.133    &23.000 &18.386&0.8128 &0.75819   \\
			\hline
			&$10\%$     &54&7.2995    &31.251 & 26.637 &0.96867 &0.91688 \\                             
Method in \cite{nn20} &$20\%$   &136&13.73    &30.594 &25.98&0.96367 &0.90938 \\
			&$30\%$     &685&52.413    &29.466 &24.852&0.95091 & 0.90249   \\
			&$60\%$    &308&33.526    &21.047 &16.433&0.74924  &0.72576    \\
			\hline
		\end{tabular}
	\end{table}
\end{center}
	
\begin{example}
	\normalfont	
	As a final example, we select
	\texttt{Female-4.1.04}, a 
	$256 \times 256$-pixel color image with 24 bits/pixel.
	The PSF is chosen as
	\begin{align*}
		K= \left[ {\begin{array}{*{20}{c}}
				{\begin{array}{*{20}{c}}
						0.7 AB(15)  \\
						0.15GA(11,9)  \\
						0.15GA(20,17)   \\
				\end{array}} & {\begin{array}{*{20}{c}}
						0.1GA(20,11)   \\
						0.8AB(15)  \\
						0.1AB(11)  \\
				\end{array}} & {\begin{array}{*{20}{c}}
						0.2MA(40,90) \\
						0.2MA(20,30)  \\
						0.6MA(30,45)   \\
				\end{array}}  \\
		\end{array}} \right].
	\end{align*}
	
	Also, random-valued noise with a different density is added to images. 	
	Images restored using the proposed algorithm are presented in 
	Figure \ref{fig6} for noise of various density.	
	Furthermore, to examine the convergence of the proposed method, the PSNR and 
	relative error plots are presented in Figure \ref{fig7} for various iterations and noise 
	density. The results shown in the plots indicate that the method converges.	
\end{example}

\begin{figure}[H]
	\centering
	\makebox[0pt]{
		\subfigure[]{\label{fig:gull}\includegraphics[width=0.25\textwidth]{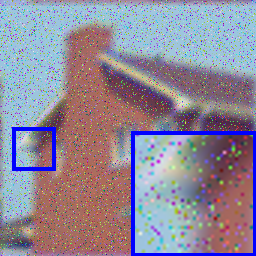}}
		\subfigure[]{\label{fig:tiger}\includegraphics[width=0.25\textwidth]{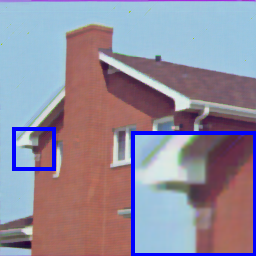}}
		\subfigure[]{\label{fig:tiger}\includegraphics[width=0.25\textwidth]{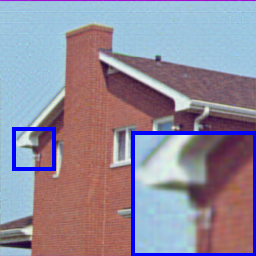}}
		\subfigure[]{\label{fig:tiger}\includegraphics[width=0.25\textwidth]{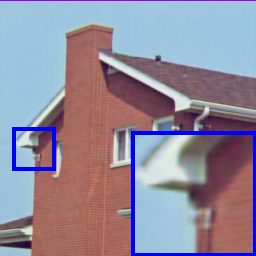}}
	}\\
	\vspace{-.2cm}
	\makebox[0pt]{
		\subfigure[]{\label{fig:gull}\includegraphics[width=0.25\textwidth]{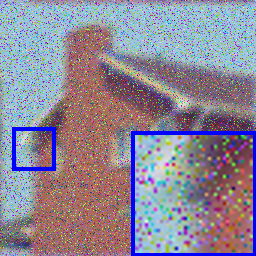}}
\subfigure[]{\label{fig:tiger}\includegraphics[width=0.25\textwidth]{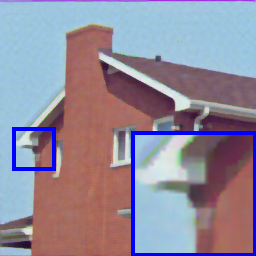}}
\subfigure[]{\label{fig:tiger}\includegraphics[width=0.25\textwidth]{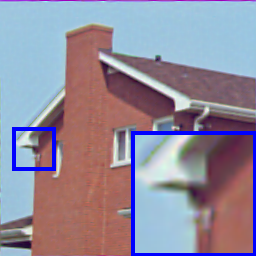}}
\subfigure[]{\label{fig:tiger}\includegraphics[width=0.25\textwidth]{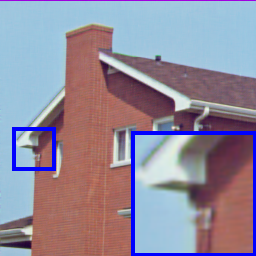}}
	}\\
	\makebox[0pt]{
		\subfigure[]{\label{fig:gull}\includegraphics[width=0.25\textwidth]{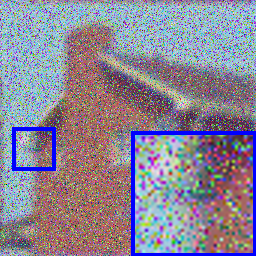}}
\subfigure[]{\label{fig:tiger}\includegraphics[width=0.25\textwidth]{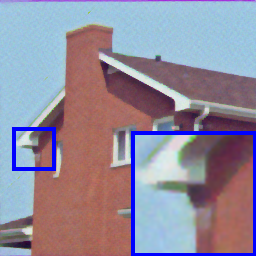}}
\subfigure[]{\label{fig:tiger}\includegraphics[width=0.25\textwidth]{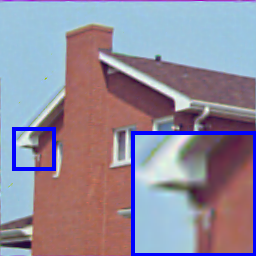}}
\subfigure[]{\label{fig:tiger}\includegraphics[width=0.25\textwidth]{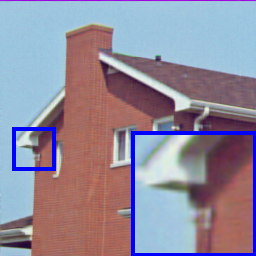}}
	}
	\vspace{-.3cm}
	\caption{(a, e, i) Blurred and noisy images with noise densities of 
		$10\%, 
		20\%$ and 
		$30\%$, respectively; restored images obtained by (b, f, j) the method in \cite{nn19}, (c, g, k) the method in \cite{nn20}, and (d, h, l) the proposed method, each at the corresponding noise density.}
	\label{fig5}
\end{figure}

	\begin{figure}[H]
		\centering
		\makebox[0pt]{
			\subfigure[]{\label{fig:gull}\includegraphics[width=0.25\textwidth]{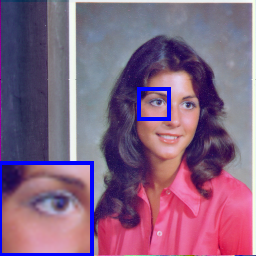}}
			\subfigure[]{\label{fig:tiger}\includegraphics[width=0.25\textwidth]{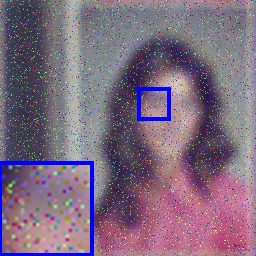}}
			\subfigure[]{\label{fig:tiger}\includegraphics[width=0.25\textwidth]{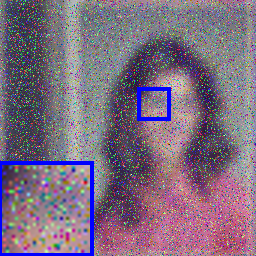}}
			\subfigure[]{\label{fig:tiger}\includegraphics[width=0.25\textwidth]{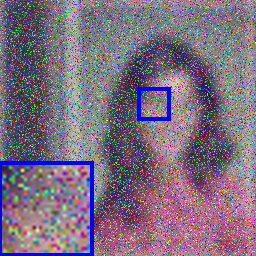}}
		}\\
		\vspace{-.2cm}
		\makebox[0pt]{
			\subfigure[]{\label{fig:gull}\includegraphics[width=0.25\textwidth]{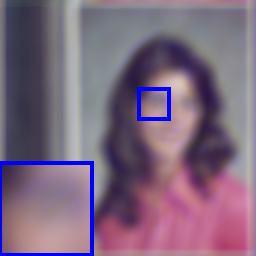}}
			\subfigure[]{\label{fig:tiger}\includegraphics[width=0.25\textwidth]{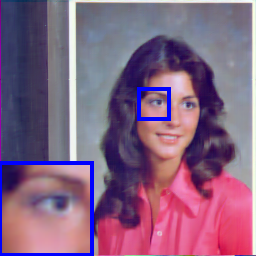}}
			\subfigure[]{\label{fig:tiger}\includegraphics[width=0.25\textwidth]{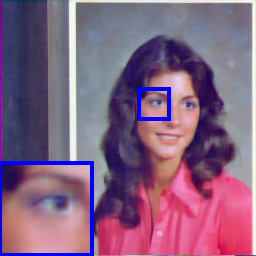}}
			\subfigure[]{\label{fig:tiger}\includegraphics[width=0.25\textwidth]{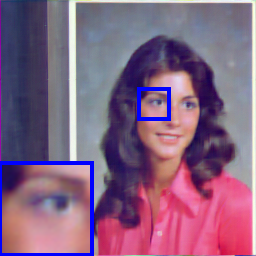}}
		}
		\vspace{-.3cm}
		\caption{(a) Clear image, (e) Blurred image, (b, c, d) Blurred and noisy images with noise densities of 
			$10\%, 
			20\%$ and 
			$30\%$, respectively; (f, g, h)  Restored images obtained by  the proposed method, each at the corresponding noise density.}
		\label{fig6}
	\end{figure}

	\begin{figure}[H]
		\centering
		\makebox[0pt]{
			\subfigure[]{\label{fig:gull}\includegraphics[width=0.5\textwidth]{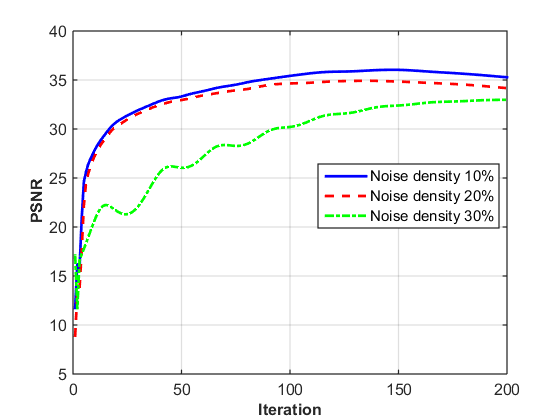}}
			\subfigure[]{\label{fig:tiger}\includegraphics[width=0.5\textwidth]{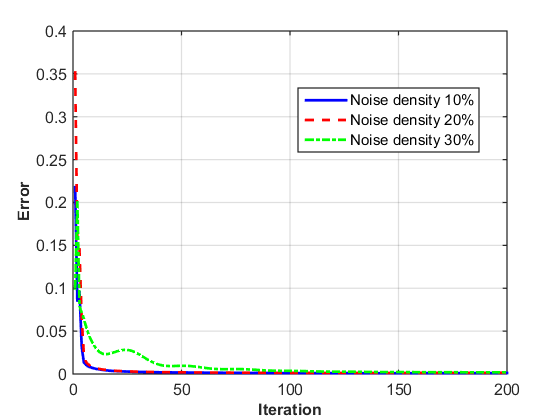}}
		}\\
		\vspace{-.3cm}
		\caption{Results for  different noisy density of the proposed method, (a) PSNR, (b) Relative error}
		\label{fig7}
	\end{figure}

	\section{Conclusion}

In this paper, we first presented a model based on the Minimax Concave Penalty for image denoising and deblurring. The resulting model has a nonconvex structure. We then proposed a numerical algorithm to solve this model using ADMM and extended the proposed method to color images. The convergence of the method is analyzed. In the section on numerical results, the quality of the retrieved images is evaluated using different types of tests, and the results show the efficiency of the method.


\begin{thebibliography}{99}
		\providecommand{\doi}[1]{DOI~\discretionary{}{}{}#1}
		\bibitem{nn01}
		Gonzalez, R.C. and Woods, R.E., 2000. \textit{Processamento de imagens digitais}. Editora Blucher.
		\bibitem{nna01}
		Sundararajan, D., 2026. Digital image processing: a signal processing and algorithmic approach. Springer Nature.
		\bibitem{nn02}
		Zhang, X., Bai, M. and Ng, M.K., 2017. Nonconvex-TV based image restoration with impulse noise removal. 
		SIAM Journal on Imaging Sciences, 10(3), pp.1627-1667.
		\bibitem{nn03}
		Adam, T., Paramesran, R., Mingming, Y. and Ratnavelu, K., 2021. Combined higher order non-convex total 
		variation with overlapping group sparsity for impulse noise removal. \textit{Multimedia Tools and 
			Applications}, 80(12), pp.18503-18530.
		\bibitem{nn04}
		Hansen, P.C., Nagy, J.G. and O'leary, D.P., 2006. \textit{Deblurring images: matrices, spectra, 
			and filtering}. Society for Industrial and Applied Mathematics.
		\bibitem{nn05}
		Rudin, L.I., Osher, S. and Fatemi, E., 1992. Nonlinear total variation based noise removal algorithms. 
		\textit{Physica D: nonlinear phenomena}, 60(1-4), pp.259-268.
		\bibitem{nn09}
		Naeem, M., Bhatti, S.M., Rashid, M., Jaffar, A., Akram, S., Fida, B. and Ahmad, A., 2026. 
		A robust deep learning approach for impulse noise filtering using hybrid auto-encoder with 
		fuzzy median filter. \textit{Plos one}, 21(4), p.e0343141.
		\bibitem{nn010}
        Yadav, N.K., Dhawan, A., Tiwari, M. and Jha, S.K., 2025. A state-of-the-art survey on noise 
        removal in a non-stationary signal using adaptive finite impulse response filtering: challenges, 
        techniques, and applications. \textit{International Journal of Systems Science}, 56(4), pp.885-918.
		\bibitem{nn011}
		Zhang, K., Ren, W., Luo, W., Lai, W.S., Stenger, B., Yang, M.H. and Li, H., 2022. Deep image 
		deblurring: A survey. \textit{International Journal of Computer Vision}, 130(9), pp.2103-2130.
		\bibitem{na1}
		Wen, C., Liang, W., Lu, J., Xu, C. and Zou, Y., 2026. Fractal image coding with deep image 
		prior and total variation regularization for salt-and-peppers noise removal. \textit{Fractals}, 34(01), p.2550095.
		\bibitem{na2}
		Zhao, J.L., Zhang, T.H., Fang, S., Gao, J.F., Wang, J.Y. and Gong, M.G., 2025. 
		Spatial-spectral texture--preserved total variation: A novel regularization for hyperspectral image denoising. 
		\textit{IEEE Transactions on Circuits and Systems for Video Technology}.
		\bibitem{nn1}
         Yuan, G. and Ghanem, B., 2017. $\ell _0 $ TV: A Sparse 
         Optimization Method for Impulse Noise Image Restoration. \textit{IEEE transactions on 
         	pattern analysis and machine intelligence}, 41(2), pp.352-364.
       \bibitem{nna1}
         Yin, M., Adam, T., Paramesran, R. and Fikree Hassan, M., 2025. 
         An $\ell_0$ total generalized variation for impulse noise removal. \textit{Multimedia Tools and 
         	Applications}, 84(27), pp.31813-31839.
		\bibitem{nn2}
		Yang, J., Zhang, Y. and Yin, W., 2009. An efficient TVL1 algorithm for deblurring multichannel 
		images corrupted by impulsive noise. \textit{SIAM Journal on Scientific Computing}, 31(4), pp.2842-2865.
		\bibitem{nn3}
		Yang, J., Zhang, Y. and Yin, W., 2010. A fast alternating direction method for TVL1-L2 signal 
		reconstruction from partial Fourier data. \textit{IEEE Journal of Selected Topics in Signal Processing}, 4(2), pp.288-297.
		\bibitem{nn4}
		Ge, D., Jiang, X. and Ye, Y., 2011. A note on the complexity of L p minimization. 
		\textit{Mathematical programming}, 129(2), pp.285-299.
		\bibitem{nn08}
		Zhang, C.H., 2010. Nearly unbiased variable selection under minimax concave penalty. 
				\textit{Annals of statistics}, 38(2), pp.894-942.
		\bibitem{nn06}
		Selesnick, I., 2017. Total variation denoising via the Moreau envelope. \textit{IEEE Signal 
			Processing Letters}, 24(2), pp.216-220.
		\bibitem{nn07}
		Storath, M., Weinmann, A. and Demaret, L., 2014. Jump-sparse and sparse recovery using Potts 
		functionals. \textit{IEEE Transactions on Signal Processing}, 62(14), pp.3654-3666.
		\bibitem{nn5}
		You, J., Jiao, Y., Lu, X. and Zeng, T., 2019. A nonconvex model with minimax concave penalty for image restoration. 
		\textit{Journal of Scientific Computing}, 78(2), pp.1063-1086.
		\bibitem{nn6}
		Gong, Y. and Zhao, P., 2026. Improved model based on weighted fractional-order overlapping group sparsity with
		 minimax concave penalty for image denoising and deblurring. \textit{Journal of Electronic Imaging}, 35(2), pp.023044-023044.
		\bibitem{nn7}
		Liu, X., Yu, X. and Liang, J., 2025. Minimax concave penalty-based TGV model for Poissonian image 
		restoration. \textit{Mathematics and Computers in Simulation}, 238, pp.25-44.
		\bibitem{nn8}
		Pokala, P.K., Hemadri, R.V. and Seelamantula, C.S., 2021. Iteratively reweighted 
		minimax-concave penalty minimization for accurate low-rank plus sparse matrix decomposition. 
		\textit{IEEE Transactions on Pattern Analysis and Machine Intelligence}, 44(12), pp.8992-9010.
		\bibitem{nn9}
		Yang, X.J., 2019. General fractional derivatives: theory, methods and applications. CRC press.
		\bibitem{nn10}
         He, N., Wang, J.B., Zhang, L.L. and Lu, K., 2015. An improved fractional-order differentiation model for image denoising.
         \textit{signal Processing}, 112, pp.180-188.
		\bibitem{nn11}
		Han, B., 2017. Framelets and wavelets. Algorithms, Analysis, and Applications. Birkh\"auser/Springer, Cham.
		\bibitem{nn12}
		Wang, Y., Yin, W. and Zeng, J., 2019. Global convergence of ADMM in nonconvex nonsmooth optimization. \textit{Journal of Scientific Computing}, 78(1), pp.29-63.
		\bibitem{nn14}
		Hong, M., Luo, Z.Q. and Razaviyayn, M., 2016. Convergence analysis of alternating direction method of multipliers for a family of nonconvex problems. \textit{SIAM Journal on Optimization}, 26(1), pp.337-364.
		
		
	   \bibitem{nn15}
		Dong, F. and Chen, Y., 2016. A FRACTIONAL-ORDER DERIVATIVE BASED VARIATIONAL FRAMEWORK FOR IMAGE DENOISING. \textit{Inverse Problems \& Imaging}, 10(1), p.27.
		\bibitem{nn16}
		Chen, D., Chen, Y. and Xue, D., 2013. Fractional‐order total variation image restoration based on primal‐dual algorithm. In \textit{Abstract and Applied Analysis} (Vol. 2013, No. 1, p. 585310). Hindawi Publishing Corporation.
		
		\bibitem{nn17}
		Lou, Y. and Yan, M., 2018. Fast L1-L2 minimization via a proximal operator. \textit{Journal of Scientific Computing,} 74(2), pp.767-785.
		\bibitem{nn18}
		You, J., Jiao, Y., Lu, X. and Zeng, T., 2019. A nonconvex model with minimax concave penalty for image restoration. \textit{Journal of Scientific Computing}, 78(2), pp.1063-1086.
		\bibitem{nn19}
		Lian, W. and Liu, X., 2023. Non-convex fractional-order TV model for impulse noise removal. J\textit{ournal of Computational and Applied Mathematics}, 417, p.114615.
		\bibitem{nn20}
		Liu, X., 2017. Alternating minimization method for image restoration corrupted by impulse noise. \textit{Multimedia Tools and Applications}, 76(10), pp.12505-12516.
		
		
		
	\end{thebibliography}
\end{document}